\newtheorem{theorem}{Theorem}
\newtheorem{corollary}[theorem]{Corollary}
\newtheorem{definition}[theorem]{Definition}
\newtheorem{lemma}[theorem]{Lemma}
\def\NN{\mathbb{N}}
\def\xx{+6}
\def\xxx{+12}
\def\yy{+4}
\def\And{\wedge}
\def\AND{\bigwedge}
\author{Rafael Ara\'ujo\affiliationmark{1}\thanks{rafaelteixeira@lia.ufc.br}
  \and Eurinardo Costa\affiliationmark{1}\thanks{eurinardo@ufc.br}
  \and Sulamita Klein\affiliationmark{2}\thanks{sula@cos.ufrj.br}
  \and Rudini Sampaio\affiliationmark{1}\thanks{rudini@dc.ufc.br}
  \and U\'everton S. Souza\affiliationmark{3}\thanks{ueverton@ic.uff.br}}
\title[FPT algorithms to recognize well covered graphs]{FPT algorithms to recognize well covered graphs}
\affiliation{
  Universidade Federal do Cear\'a, Fortaleza, Brazil\\
  Universidade Federal do Rio de Janeiro, Rio de Janeiro, Brazil\\
  Universidade Federal Fluminense, Niter\'oi, Brazil}
\keywords{Well covered graphs, primeval decomposition, degenerate graphs, fixed parameter tractability, polynomial kernel}
\begin{document}
\maketitle
\begin{abstract}
Given a graph $G$, let $vc(G)$ and $vc^+(G)$ be the sizes of a minimum vertex cover and a maximum minimal vertex cover of $G$, respectively. We say that $G$ is well covered if $vc(G)=vc^+(G)$ (that is, all minimal vertex covers have the same size). Determining if a graph is well covered is a coNP-complete problem.
In this paper, we obtain $O^*(2^{vc})$-time and $O^*(1.4656^{vc^+})$-time algorithms to decide well coveredness, improving results of 2015 by Boria et al. Moreover, using crown decomposition, we show that such problems admit kernels having linear number of vertices.
In 2018, Alves et. al. proved that recognizing well covered graphs is coW[2]-hard when $\alpha(G)=n-vc(G)$ is the parameter. Contrasting with such coW[2]-hardness, we present an FPT algorithm to decide well coveredness when $\alpha(G)$ and the degeneracy of the input graph $G$ are aggregate parameters.
Finally, we use the primeval decomposition technique to obtain a linear time algorithm for extended $P_4$-laden graphs and $(q,q-4)$-graphs, which is FPT parameterized by $q$, improving results of 2013 by Klein et al.
\end{abstract}

\section{Introduction}
Let $G=(V,E)$ be a graph. A subset $C$ is called a vertex cover if every edge of $G$ has an endpoint in $C$. A subset $I$ of $G$ is called an independent set if every pair of distinct vertices of $I$ are not adjacent in $G$. It is well known that $C$ is a vertex cover if and only if $V-C$ is an independent set.

Let $vc(G)$ be the size of a minimum vertex cover and let the independence number $\alpha(G)=n-vc(G)$ be the size of a largest independent set in $G$.

A vertex cover is minimal if it does not contain any distinct vertex cover of $G$. An independent set is maximal if it is not properly contained in any other independent set of $G$.
A graph $G$ is called well covered if all minimal vertex covers of $G$ have the same size $vc(G)$. That is, if $vc(G)=vc^+(G)$, where $vc^+(G)$ is the size of a maximum minimal vertex cover. Clearly, $vc(G)\leq vc^+(G)$ for every graph $G$.
Alternatively, a graph $G$ is well covered if all maximal independent sets of $G$ have the same size $\alpha(G)$.
The concept of well covered graph was introduced by \cite{plummer70}.

Well covered graphs are interesting because the greedy algorithm for producing a maximal independent set (resp. a minimal vertex cover) always produces a maximum independent set (resp. a minimum vertex cover). Recall that determining the independence number and the minimum vertex cover of a general graph are NP-hard problems.
Unfortunately, the problem of deciding if a graph is well covered is coNP-complete. This was independently proved by \cite{chvatal93} and by \cite{stewart92}. The problem remains coNP-complete even when the input graph is $K_{1,4}$-free (see \cite{caro96}).

Several papers investigate well coveredness in graph classes in order to obtain structural characterizations and polynomial time algorithms that recognize if a graph of such classes is well covered. See for example \cite{caro97,dean94,fradkin09,finbow09,prisner96,plummer93,randerath06,tankus97}.

We first consider some classes of graphs that have been characterized in terms of special properties of the unique primeval decomposition tree associated to each graph of the class. The primeval decomposition tree of any graph can be computed in time linear in the number of vertices and edges (see \cite{jamison95}) and therefore it is the natural framework for finding polynomial time algorithms of many problems. \cite{klein13} investigated the well coveredness of many classes of graphs with few $P_4$'s, such as cographs, $P_4$-reducible, $P_4$-sparse, extended $P_4$-reducible, extended $P_4$-sparse, $P_4$-extendible, $P_4$-lite and $P_4$-tidy.
In this paper, we extend results of 2013 by \cite{klein13} for two superclasses of those graph classes: extended $P_4$-laden graphs and $(q,q-4)$-graphs.
We obtain linear time algorithms to decide well coveredness for such graph classes. The algorithm for $(q,q-4)$-graphs is FPT parameterized by $q$.

We also obtain $O^*(2^{vc})$-time and $O^*(1.4656^{vc^+})$-time FPT algorithms to decide well coveredness, parameterized by $vc(G)$ and $vc^+(G)$, respectively, improving results of 2015 by \cite{boria15}.
Moreover, using crown decomposition, we show that such problems admit kernels having linear number of vertices. Contrasting with the coW[2]-hardness of recognizing well covered graphs by \cite{sula-cocoa16} when $\alpha(G)=n-vc(G)$ is the parameter, we obtain an FPT algorithm to decide well coveredness when $\alpha(G)$ and the degeneracy of the input graph $G$ are aggregate parameters, implying the fixed-parameter tractability, with respect to $\alpha(G)$, of graphs having bounded genus (such as planar graphs) and graphs with bounded maximum degree.

\section{The size of a minimum vertex cover as parameter}


We say that the running time of an FPT algorithm is $O^*(f(k))$, if it can be performed in $O(f(k)\cdot n^c)$-time, for some constant $c$.

In 2015, Boria et al. (see Theorem 5 of \cite{boria15}) proved that the maximum minimal vertex cover problem is FPT parameterized by $vc(G)$ (the vertex cover number). They obtained an $O^*(2.8284^{vc})$-time FPT algorithm to compute the maximum minimal vertex cover, which can be used to decide well coveredness of a graph. \cite{sula-cocoa16} investigated the well coveredness problem and proved that it is FPT parameterized by $vc(G)$, by obtaining an FPT algorithm with time $O^*(2^{nd})=O^*(2^{vc+2^{vc}})$, where $nd(G)$ is the neighborhood diversity of $G$.

In the following, we improve these results and show that it is possible to enumerate all minimal vertex covers in time $O(2^{vc}\cdot(m+n))$.

\begin{theorem}\label{teo-fpt}
It is possible to enumerate in time $O(2^{vc}\cdot(m+n))$ all minimal vertex covers of a graph. Consequently, there exists an $O^*(2^{vc})$-time FPT algorithm to decide  well coveredness parameterized by $vc=vc(G)$.
\end{theorem}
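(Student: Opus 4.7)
The plan is to first compute a minimum vertex cover $C^*$ of size $vc$ in $O(2^{vc}(m+n))$ time via the classical branching algorithm for vertex cover: pick any uncovered edge $uv$, branch on whether $u$ or $v$ enters the cover, and decrease the parameter by one; the search tree has at most $2^{vc}$ leaves, each costing $O(m+n)$ work. Set $I^* := V \setminus C^*$; this is an independent set. Since minimal vertex covers are exactly complements of maximal independent sets, it suffices to enumerate all maximal independent sets of $G$ within the same time bound, and then take complements.

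The structural key is the claim that every maximal independent set $I$ is uniquely recovered from its restriction $S := I \cap C^*$ via the formula $I = I_S$, where $I_S := S \cup (I^* \setminus N(S))$. To see this, write $I = S \cup T$ with $T \subseteq I^*$. Independence of $I$ forces $T \subseteq I^* \setminus N(S)$; if the containment were strict, any $v \in (I^* \setminus N(S)) \setminus T$ would have no neighbor in $S$ (by choice) and no neighbor in $T \subseteq I^*$ (since $I^*$ is independent), so $I \cup \{v\}$ would be independent, contradicting maximality. Conversely, for a given $S \subseteq C^*$, the candidate $I_S$ is a maximal independent set of $G$ if and only if (i) $S$ is independent in $G[C^*]$ and (ii) every $v \in C^* \setminus S$ has a neighbor in $I_S$: condition (i) together with the removal of $N(S) \cap I^*$ guarantees independence of $I_S$, while the vertices of $I^* \cap N(S)$ outside $I_S$ are automatically dominated by $S \subseteq I_S$.

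The algorithm then enumerates all $2^{vc}$ subsets $S \subseteq C^*$; for each, it constructs $I_S$ and tests conditions (i) and (ii) in $O(m+n)$ time (compute $N(S)$ in a single scan of the adjacency lists, form $I_S$, and inspect the vertices of $V \setminus I_S$). Each valid $S$ outputs the minimal vertex cover $V \setminus I_S$, and by the bijection above the enumeration is exhaustive and duplicate-free, giving the stated $O(2^{vc}(m+n))$ bound. To decide well coveredness it then suffices to track the cardinalities of the enumerated covers and answer positively iff they all equal $vc$, still within $O(2^{vc}(m+n))$ time.

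The main (and essentially only) delicate step is the uniqueness formula $T = I^* \setminus N(S)$: it relies crucially on $I^*$ being independent, so that no vertex of $I^* \setminus I$ can be dominated by another vertex of $I^*$. This is exactly why fixing a minimum vertex cover and partitioning $V$ as $C^* \cup I^*$ is the right framework; once this combinatorial lemma is in place, the running-time analysis reduces to routine bookkeeping over the fixed partition.
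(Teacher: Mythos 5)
Your proposal is correct and follows essentially the same approach as the paper: the paper fixes a minimum vertex cover $C$, enumerates all $2^{vc}$ partitions $(A,B)$ of $C$, and tests whether $A\cup(N(B)\setminus B)$ is a minimal vertex cover, which under the identification $B=S$, $A=C^*\setminus S$ is exactly the complement of your candidate $I_S=S\cup(I^*\setminus N(S))$. Your write-up merely works on the maximal-independent-set side and spells out the bijection and the verification conditions a bit more explicitly, but the algorithm and the $O(2^{vc}\cdot(m+n))$ analysis coincide with the paper's.
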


\begin{proof}
Let $C$ be a minimum vertex cover of $G$. Then all edges have an endpoint in $C$. Therefore, for every partition of $C$ in two sets $A$ and $B$ ($A\cup B=C$, $A\cap B=\emptyset$), $A\cup (N(B)\setminus B)$ is a vertex cover of $G$ if there are no edges with both endpoints in $B$.

Moreover, for every minimal vertex cover $C'$ of $G$, $A=C\cap C'$ and $B=C\setminus C'$ form a partition of $C$ such that $C'=A\cup(N(B)\setminus B)$, since $C'\setminus C\subseteq N(B)$ (because $C'$ is a vertex cover and is minimal).

Thus, we can enumerate all minimal vertex covers of $G$ by checking for every partition $(A,B)$ of $C$ if $A\cup (N(B)\setminus B)$ is a minimal vertex cover of $G$.
See Figure \ref{fig.particoes} for an example with a graph $G$, a minimum vertex cover $C$ and three different partitions $(A,B)$ of $C$. Only in the third partition, $A\cup(N(B)\setminus B)$ is a minimal vertex cover. In the first  partition, $A\cup(N(B)\setminus B)$ is not minimal and, in the second partition,  $A\cup(N(B)\setminus B)$ is not a vertex cover.
Notice that verifying if a set is a minimal vertex cover can be done in time $O(m+n)$. Since there are $2^{|C|}$ partitions of $C$, $|C|=vc(G)$ and it is possible to obtain a minimum vertex cover $C$ in time $O(2^{vc}\cdot(m+n))$, we are done.
\end{proof}

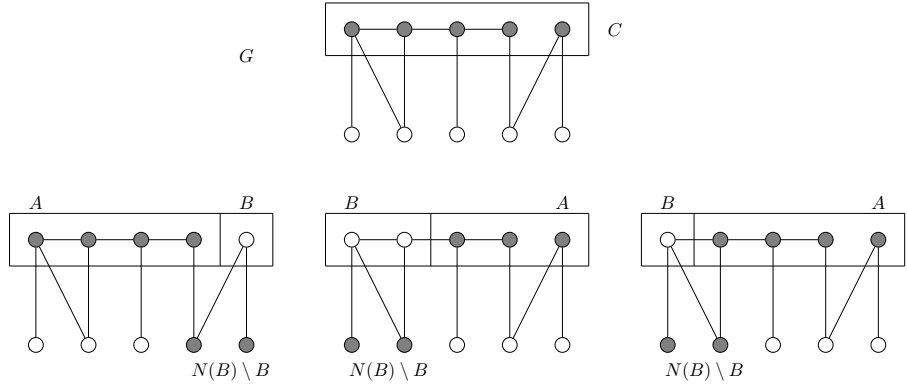
\begin{figure}\centering\scalebox{0.7}{
\begin{tikzpicture}[auto]
\tikzstyle{vertex}=[draw,circle,fill=white!25,minimum size=8pt,inner sep=1pt]
\tikzstyle{cvertex}=[draw,circle,fill=black!50,minimum size=8pt,inner sep=1pt]
\node[cvertex] (c1) at (2\xx,2\yy) {};
\node[cvertex] (c2) at (3\xx,2\yy) {};
\node[cvertex] (c3) at (4\xx,2\yy) {};
\node[cvertex] (c4) at (5\xx,2\yy) {};
\node[cvertex] (c5) at (6\xx,2\yy) {};
\node[vertex] (n1) at (2\xx,0\yy) {};
\node[vertex] (n2) at (3\xx,0\yy) {};
\node[vertex] (n3) at (4\xx,0\yy) {};
\node[vertex] (n4) at (5\xx,0\yy) {};
\node[vertex] (n5) at (6\xx,0\yy) {};

\path[-]
(c1) edge (c2) edge (n1) edge (n2)
(c2) edge (n2)
(c3) edge (c2) edge (c4) edge (n3)
(c4) edge (n4)
(c5) edge (n4) edge (n5);

\draw[-] (1.5\xx,1.5\yy) to (1.5\xx,2.5\yy);
\draw[-] (1.5\xx,2.5\yy) to (6.5\xx,2.5\yy);
\draw[-] (6.5\xx,2.5\yy) to (6.5\xx,1.5\yy);
\draw[-] (6.5\xx,1.5\yy) to (1.5\xx,1.5\yy);
 
\node () at (7\xx,2\yy) {$C$};
\node () at (0\xx,1.5\yy) {$G$};

\node[cvertex] (c1) at (2,2) {};
\node[cvertex] (c2) at (3,2) {};
\node[cvertex] (c3) at (4,2) {};
\node[cvertex] (c4) at (5,2) {};
\node[vertex]  (c5) at (6,2) {};
\node[vertex]  (n1) at (2,0) {};
\node[vertex]  (n2) at (3,0) {};
\node[vertex]  (n3) at (4,0) {};
\node[cvertex] (n4) at (5,0) {};
\node[cvertex] (n5) at (6,0) {};

\path[-]
(c1) edge (c2) edge (n1) edge (n2)
(c2) edge (n2)
(c3) edge (c2) edge (c4) edge (n3)
(c4) edge (n4)
(c5) edge (n4) edge (n5);

\draw[-] (1.5,1.5) to (1.5,2.5);
\draw[-] (1.5,2.5) to (6.5,2.5);
\draw[-] (6.5,2.5) to (6.5,1.5);
\draw[-] (6.5,1.5) to (1.5,1.5);
\draw[-] (5.5,1.5) to (5.5,2.5);
 
\node () at (2,2.7) {$A$};
\node () at (6,2.7) {$B$};
\node () at (5.7,-0.5) {$N(B)\setminus B$};

\node[vertex] (c1) at (2\xx,2) {};
\node[vertex] (c2) at (3\xx,2) {};
\node[cvertex] (c3) at (4\xx,2) {};
\node[cvertex] (c4) at (5\xx,2) {};
\node[cvertex] (c5) at (6\xx,2) {};
\node[cvertex] (n1) at (2\xx,0) {};
\node[cvertex] (n2) at (3\xx,0) {};
\node[vertex] (n3) at (4\xx,0) {};
\node[vertex] (n4) at (5\xx,0) {};
\node[vertex] (n5) at (6\xx,0) {};

\path[-]
(c1) edge (c2) edge (n1) edge (n2)
(c2) edge (n2)
(c3) edge (c2) edge (c4) edge (n3)
(c4) edge (n4)
(c5) edge (n4) edge (n5);

\draw[-] (1.5\xx,1.5) to (1.5\xx,2.5);
\draw[-] (1.5\xx,2.5) to (6.5\xx,2.5);
\draw[-] (6.5\xx,2.5) to (6.5\xx,1.5);
\draw[-] (6.5\xx,1.5) to (1.5\xx,1.5);
\draw[-] (3.5\xx,1.5) to (3.5\xx,2.5);
 
\node () at (2\xx,2.7) {$B$};
\node () at (6\xx,2.7) {$A$};
\node () at (2.7\xx,-0.5) {$N(B)\setminus B$};

\node[vertex]  (c1) at (2\xxx,2) {};
\node[cvertex] (c2) at (3\xxx,2) {};
\node[cvertex] (c3) at (4\xxx,2) {};
\node[cvertex] (c4) at (5\xxx,2) {};
\node[cvertex] (c5) at (6\xxx,2) {};
\node[cvertex] (n1) at (2\xxx,0) {};
\node[cvertex] (n2) at (3\xxx,0) {};
\node[vertex]  (n3) at (4\xxx,0) {};
\node[vertex]  (n4) at (5\xxx,0) {};
\node[vertex]  (n5) at (6\xxx,0) {};

\path[-]
(c1) edge (c2) edge (n1) edge (n2)
(c2) edge (n2)
(c3) edge (c2) edge (c4) edge (n3)
(c4) edge (n4)
(c5) edge (n4) edge (n5);

\draw[-] (1.5\xxx,1.5) to (1.5\xxx,2.5);
\draw[-] (1.5\xxx,2.5) to (6.5\xxx,2.5);
\draw[-] (6.5\xxx,2.5) to (6.5\xxx,1.5);
\draw[-] (6.5\xxx,1.5) to (1.5\xxx,1.5);
\draw[-] (2.5\xxx,1.5) to (2.5\xxx,2.5);
 
\node () at (2\xxx,2.7) {$B$};
\node () at (6\xxx,2.7) {$A$};
\node () at (2.7\xxx,-0.5) {$N(B)\setminus B$};

\end{tikzpicture}}
\caption{Graph $G$ with a minimum vertex cover $C$ and three different partitions $(A,B)$ of $C$. Only in the third partition, $A\cup(N(B)\setminus B)$ is a minimal vertex cover.}
\label{fig.particoes}
\end{figure}

\vspace{0.2cm}

\subsection{Polynomial kernel}

Crown decomposition is a general kernelization technique that can be used
to obtain kernels for many problems (see \cite{cygan2015parameterized}).

\begin{definition} (Crown decomposition) 
A crown decomposition of a graph $G$ is a parti\-tioning of $V(G)$ into three parts $C$ (Crown), $H$ (Head) and $R$ (Remainder), such that:
\begin{itemize}
\item $C$ is nonempty;
\item $C$ is an independent set;
\item There are no edges between vertices of $C$ and $R$. That is, $H$ separates $C$ and $R$;
\item Let $E$ be the set of edges between vertices of $C$ and $H$. Then $E$ contains a matching of size $|H|$. In other words, $G$ contains a matching of $H$ into $C$.
\end{itemize}
\end{definition}

The following lemma is the basis for kernelization using crown
decomposition (see \cite{cygan2015parameterized}).

\begin{lemma}~\label{crown lemma} (Crown lemma)
Let $k$ be a positive integer and let $G$ be a graph without isolated vertices and with at least $3k + 1$ vertices. There is a polynomial-time algorithm that either
\begin{itemize}
\item finds a matching of size $k + 1$ in $G$; or
\item finds a crown decomposition of $G$.
\end{itemize}
\end{lemma}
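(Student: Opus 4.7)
The plan is to combine a maximal matching in $G$ with a K\"onig-style analysis of a bipartite auxiliary graph. First, compute greedily a maximal matching $M$ of $G$ in polynomial time. If $|M|\geq k+1$, output $M$ and halt. Otherwise $|V(M)|\leq 2k$, and the maximality of $M$ guarantees that the complementary set $I:=V(G)\setminus V(M)$ is an independent set with $|I|\geq n-2k\geq k+1$. Because $G$ has no isolated vertices, every vertex of $I$ has at least one neighbor in $V(M)$.

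Second, form the bipartite subgraph $G'$ on the bipartition $(I,V(M))$ containing all $G$-edges between $I$ and $V(M)$, and compute a maximum matching $M'$ of $G'$ (e.g., via Hopcroft--Karp). If $|M'|\geq k+1$, output $M'$ as the required matching of $G$. Otherwise $|M'|\leq k<|I|$, so the set $I_0$ of $I$-vertices left unsaturated by $M'$ is nonempty. Now grow a crown iteratively: set $H_0=N(I_0)\cap V(M)$, and for $i\geq 0$ define $I_{i+1}=I_i\cup M'(H_i)$ and $H_{i+1}=N(I_{i+1})\cap V(M)$. Since $(H_i)$ is nondecreasing and bounded by $|V(M)|\leq 2k$, the process stabilizes in polynomially many steps at some $C:=I_t$ and $H:=H_t$; set $R:=V(G)\setminus(C\cup H)$.

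Finally, verify the four crown axioms: $C$ is nonempty because $I_0\subseteq C$; independent because $C\subseteq I$; and $N(C)\subseteq H$ by stabilization together with $I$ being independent, so no edge leaves $C$ into $R$. The edges of $M'$ incident to $H$ restrict to a matching saturating $H$ into $C$, yielding the required matching of $H$ into $C$. The step I expect to be the main obstacle is showing that every vertex of each $H_i$ is $M'$-saturated, which is what makes the symbol $M'(H_i)$ well defined and pins down the matching of $H$ into $C$. The argument should be that if some $h\in H_i$ were unmatched, then unrolling the iterative construction would exhibit an $M'$-alternating path from an unsaturated vertex of $I_0$ to $h$, giving an $M'$-augmenting path and contradicting the maximality of $M'$.
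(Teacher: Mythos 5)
Your argument is correct, and it is worth noting that the paper itself gives no proof of this lemma: it is quoted verbatim from Cygan et al.\ (\emph{Parameterized Algorithms}), whose standard proof shares your first phase exactly (greedy maximal matching $M$; if $|M|\le k$ then $I=V(G)\setminus V(M)$ is independent with $|I|\ge n-2k\ge k+1$; pass to the bipartite graph between $I$ and $V(M)$ and compute a maximum matching $M'$). Where the textbook proof then invokes K\H{o}nig's theorem --- take a minimum vertex cover $X$ of the bipartite graph with $|X|=|M'|\le k$, set $H=X\cap V(M)$, $C=I\setminus X$, and observe that every vertex of $X$ is saturated by $M'$ with its partner outside $X$ --- you instead grow $C$ and $H$ by alternating reachability from the $M'$-unsaturated vertices of $I$. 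This is a genuine (if mild) variation: your iteration is essentially an inlined proof of the relevant direction of K\H{o}nig's theorem, since the stabilized sets you obtain coincide with the vertices reachable by $M'$-alternating paths from unsaturated $I$-vertices, which is exactly how one constructs the minimum vertex cover. What your route buys is self-containedness (no appeal to K\H{o}nig beyond maximum bipartite matching) and a construction that directly exhibits the matching of $H$ into $C$ via the $M'$-partners; what it costs is the extra verification you correctly flag, namely that every vertex of each $H_i$ is $M'$-saturated. Your sketch of that step is sound: unrolling the construction gives an $M'$-alternating walk from an unsaturated vertex of $I_0$ to an unsaturated $h$, and (taking a shortest such walk, which is then a simple path) this is an augmenting path contradicting maximality of $M'$; the remaining checks ($C\ne\emptyset$ since $|M'|\le k<|I|$, $N(C)\subseteq H$ at stabilization because $I$ is independent, and injectivity of $h\mapsto M'(h)\in C$) all go through as you state.
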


Now, by using Crown lemma, we will present a kernelization algorithm for recognizing well covered graphs, where the vertex cover number is the parameter.  
For simplicity, let $k$ be the size of a  minimum vertex cover of $G$, i.e.,  $k=vc(G)$.

\begin{corollary}\label{cor_wc}
Let $G$ be a graph without isolated vertices whose vertex cover number equals $k$, and with at least $3k + 1$ vertices. There is a polynomial-time algorithm that
finds a crown decomposition of $G$ such that $|H|\leq k$.
\end{corollary}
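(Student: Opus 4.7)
The plan is to apply the Crown Lemma (Lemma \ref{crown lemma}) directly to $G$ with parameter $k = vc(G)$. Since $G$ has no isolated vertices and $|V(G)| \ge 3k+1$, the hypotheses of the Crown Lemma are satisfied, so in polynomial time we obtain one of its two outcomes: either a matching of size $k+1$, or a crown decomposition of $G$.

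The first outcome can be ruled out using the assumption $vc(G) = k$. Any matching $M$ in $G$ consists of pairwise vertex-disjoint edges, and every vertex cover of $G$ must contain at least one endpoint of each edge of $M$; hence $vc(G) \ge |M|$. A matching of size $k+1$ would force $vc(G) \ge k+1$, contradicting $vc(G) = k$. Therefore the algorithm of the Crown Lemma must return a crown decomposition $(C, H, R)$ of $G$.

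It remains to argue the size bound $|H| \le k$. By the definition of a crown decomposition, the edges between $C$ and $H$ contain a matching of size exactly $|H|$ (matching $H$ into $C$). This is in particular a matching of $G$, so by the same argument as above $vc(G) \ge |H|$, giving $|H| \le k$ as required. The whole procedure runs in polynomial time since computing $vc(G)$ is not needed explicitly (the number $k$ is given as part of the input, or one may simply invoke the Crown Lemma with the promised value of $k$).

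There is essentially no real obstacle here: once one commits to invoking the Crown Lemma, the work reduces to the two elementary matching-versus-vertex-cover comparisons above. The only subtlety is recognizing that the size bound on $H$ is not asserted directly by the Crown Lemma but follows automatically from the guaranteed matching of $H$ into $C$ together with the hypothesis $vc(G) = k$.
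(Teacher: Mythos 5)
Your proof is correct and takes essentially the same approach as the paper: invoke the Crown Lemma with parameter $k$, and rule out the matching outcome because a matching of size $k+1$ would force $vc(G)\geq k+1$. The only (harmless) difference is in the final bound: you derive $|H|\leq k$ directly from the matching of $H$ into $C$ being a matching of size $|H|$ in $G$, whereas the paper argues that $H$ lies in some minimum vertex cover; both give the same conclusion.
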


\begin{proof}
By Lemma~\ref{crown lemma} , there is a polynomial-time algorithm that either finds a matching of size $k + 1$ in $G$; or finds a crown decomposition of $G$.
As the size of a maximum matching of a graph is a lower bound to its vertex cover number, $G$ has no matching of size $k + 1$. 

By definition, in a crown decomposition of $G$ into $C$, $H$ and $R$, there is a matching of $H$ into $C$. As $C$ is an independent set, and there is no edge between $C$ and $R$, it is easy to see that there exists a minimum vertex cover of $G$ that contains $H$. Thus, $|H|\leq k$.
\end{proof}

\begin{lemma}\label{partitions_wc}
Let $G$ be a graph without isolated vertices and let $C$, $H$ and $R$ be a crown decomposition of $G$. If $G$ is well covered then $G[R]$ and $G[C\cup H]$ are well covered. 
\end{lemma}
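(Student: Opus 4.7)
The plan is to treat the two subgraphs separately, in each case extending a maximal independent set of the induced subgraph to a maximal independent set of $G$ and using the well coveredness of $G$ to pin down its size.

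First I would handle $G[R]$. Given any maximal independent set $I$ of $G[R]$, I would show that $I\cup C$ is a maximal independent set of $G$: independence is immediate because $C$ is independent and has no edges to $R$; maximality follows because every $v\in R\setminus I$ has a neighbor in $I$ by maximality of $I$ in $G[R]$, while every $v\in H$ has a neighbor in $C$ via the matching guaranteed by the crown decomposition. Since $G$ is well covered, $|I|+|C|=\alpha(G)$, so all maximal independent sets of $G[R]$ share the same size $\alpha(G)-|C|$; in particular, $\alpha(G[R])=\alpha(G)-|C|$.

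Next I would handle $G[C\cup H]$. A matching argument gives the upper bound $|J|\le|C|$ for every independent set $J$ of $G[C\cup H]$: each $h\in J\cap H$ forbids its match partner in $C$ from lying in $J$, so $|J\cap C|\le|C|-|J\cap H|$. For the matching lower bound, when $J$ is maximal in $G[C\cup H]$, let $I^*$ be any maximal independent set of $G[R\setminus N(J)]$. Then $J\cup I^*$ is independent (since $I^*$ avoids $N(J)$) and maximal in $G$: vertices of $(C\cup H)\setminus J$ are dominated by $J$; vertices of $R\cap N(J)$ by $J$; and vertices of $(R\setminus N(J))\setminus I^*$ by $I^*$. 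Well coveredness of $G$ then yields $|J|+|I^*|=\alpha(G)=|C|+\alpha(G[R])$, and since $I^*\subseteq R$ satisfies $|I^*|\le\alpha(G[R])$, we conclude $|J|\ge|C|$. Combined with the upper bound, every maximal independent set of $G[C\cup H]$ has size exactly $|C|$, and so $G[C\cup H]$ is well covered.

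The main obstacle is that the crown decomposition allows edges between $H$ and $R$, so an arbitrary maximal independent set of $G[R]$ need not be compatible with a chosen $J\subseteq C\cup H$. I would overcome this by forbidding the neighbors of $J$ in $R$ and working inside $G[R\setminus N(J)]$ when extending $J$; the symmetric issue when extending an $I\subseteq R$ is handled automatically by the matching condition, since it forces every $h\in H$ to already have a neighbor in $C$.
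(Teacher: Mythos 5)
Your proof is correct and follows essentially the same route as the paper: extend a maximal independent set of $G[R]$ by $C$, and extend a maximal independent set $J$ of $G[C\cup H]$ by a maximal independent set of the undominated part of $R$, using the matching of $H$ into $C$ for the bound $|J|\le|C|$. The only cosmetic difference is that you argue directly that every maximal independent set has the forced size, whereas the paper phrases the same constructions as a proof by contradiction.
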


\begin{proof}
Suppose that $G$ is well covered and $G[R]$ is not well covered. Then, $G[R]$ has two maximal independent sets $I_1$, $I_2$ such that $|I_1|\neq |I_2|$. As $C$ is an independent set that dominates $H$, it follows that $I_1\cup C$ and $I_2\cup C$ are two maximal independent set of $G$ having different cardinalities, contradicting the fact that $G$ is well covered.

Now, suppose that $G$ is well covered and $G[C\cup H]$ is not well covered. Then, $G[C\cup H]$ has a maximal independent set $I$ such that $|I|\neq |C|$. Clearly $I\setminus C\neq \emptyset$. Since $G[C\cup H]$ has a matching $M$ of $H$ into $C$, we know that each edge of $M$ has at most one vertex in $I$. Thus, $|I|< |C|$.
Let $R^*$ be the set of vertices in $R$ with no neighbors in $I$. Let $S_{R}$ and $S_{R^*}$ be a maximum independent set of $G[R]$ and $G[R^*]$, respectively. Note that $|S_{R^*}|\leq |S_R|$, then $S_R \cup C$ and $S_{R^*}\cup I$ are two maximal independent set of $G$ having different cardinalities, contradicting the fact that $G$ is well covered.
\end{proof}

\begin{lemma}\label{same_size}
Let $G$ be a graph without isolated vertices and let $C$, $H$ and $R$ be a crown decomposition of $G$ such that $R=\emptyset$. If $G$ is well covered then $|C|=$$|H|$.
\end{lemma}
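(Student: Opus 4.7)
The plan is to prove $|C|\leq |H|$, since the matching of $H$ into $C$ already gives the reverse inequality. First I would observe that $C$ itself is a maximal independent set of $G$: it is independent by the crown definition, every $h\in H$ is adjacent to its matching partner in $C$, and $R=\emptyset$ leaves nothing else available. Because $G$ is well covered, this forces $\alpha(G)=|C|$ and, more importantly, every maximal independent set of $G$ has size exactly $|C|$.

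The next step is to manufacture further maximal independent sets of $G$ in a controlled way. Arguing as in Lemma~\ref{partitions_wc}, for every $I_H\subseteq H$ that is maximal independent in $G[H]$ the set $I=I_H\cup (C\setminus N(I_H))$ is a maximal independent set of $G$: the only vertices outside $I$ lie in $(H\setminus I_H)\cup (N(I_H)\cap C)$, and all of them are dominated by $I_H$. Combining $|I|=|C|$ with the count $|I|=|I_H|+|C|-|N(I_H)\cap C|$ yields $|N(I_H)\cap C|=|I_H|$. Now fix the matching $M$ provided by the crown: since $M(I_H)\subseteq N(I_H)\cap C$ and $|M(I_H)|=|I_H|$, this numerical equality upgrades to the set identity $N(I_H)\cap C=M(I_H)$ for every maximal independent set $I_H$ of $G[H]$.

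From here the proof concludes by contradiction. Suppose $|C|>|H|$ and pick $u\in C\setminus M(H)$. Since $G$ has no isolated vertex, $C$ is independent, and $R=\emptyset$, the vertex $u$ must have at least one neighbor $h\in H$; extending $\{h\}$ to a maximal independent set $I_H$ of $G[H]$ then places $u\in N(h)\cap C\subseteq N(I_H)\cap C = M(I_H)\subseteq M(H)$, contradicting the choice of $u$. The main step requiring insight is the passage from well coveredness of $G$ to the tight Hall-type identity $N(I_H)\cap C=M(I_H)$ via the extension construction $I_H\cup (C\setminus N(I_H))$; once that identity is available, the contradiction is routine.
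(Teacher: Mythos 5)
Your proof is correct. It is essentially the independent-set dual of the paper's argument, with one extra structural step. The paper works on the vertex-cover side: since $C$ is independent and $R=\emptyset$, the set $H$ is a vertex cover, and the matching $M$ of $H$ into $C$ forces $vc(G)=|H|$; it then takes an arbitrary minimal vertex cover $K$ avoiding a neighbor $v\in H$ of an $M$-unsaturated vertex $w\in C$, and counts (one endpoint of each matching edge, plus $w\in N(v)\subseteq K$) that $|K|\geq |H|+1$, contradicting well coveredness. You instead note that $C$ is a maximal independent set, so well coveredness pins every maximal independent set at size $|C|$; your extension construction $I_H\cup\bigl(C\setminus N(I_H)\bigr)$ then upgrades this to the tight Hall-type identity $N(I_H)\cap C=M(I_H)$ for every maximal independent set $I_H$ of $G[H]$, and the unsaturated vertex $u$ with neighbor $h\in H$ gives the same contradiction. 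Both proofs hinge on the same mechanism (the matching, an unsaturated crown vertex, and a maximal independent set / minimal vertex cover through its $H$-neighbor); yours is a bit longer but isolates a reusable structural fact, whereas the paper's direct count of $|K|$ is shorter. In fact your detour could be compressed: with $u\in \bigl(N(I_H)\cap C\bigr)\setminus M(I_H)$ one gets $|I_H\cup(C\setminus N(I_H))|\leq |I_H|+|C|-(|I_H|+1)<|C|$ immediately, without stating the identity.
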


\begin{proof}
It is well-known that there is a minimum vertex cover of $G$ that contains $H$ (see~\cite{cygan2015parameterized}). As $R=\emptyset$ then $H$ is a minimum vertex cover of $G$. By definition,  in a crown decomposition of $G$ into $C$, $H$ and $R$, there is a matching $M$ of $H$ into $C$. Suppose that there is an $M$-unsaturated vertex $w$$\notin V(M)$ ($w\in C$). The vertex $w$ has a neighbor in $H$, otherwise $w$ is an isolated vertex. Let $v$ be a neighbor of $w$ and let $K$ be a minimal vertex cover of $G$ such that $v\notin K$. Note that $N(v)\subseteq K$. For any vertex $x\in H$ (including $v$) such that $x\notin K$, there is a vertex $x^c \in C$ such that $(x,x^c) \in M$, which implies that $x^c\in K$. Thus, $K$ has size at least $|H|$. Since $w$ is $M$-unsaturated and it is also a vertex in $K$, then $K$ is a minimal vertex cover of $G$ of size greater than $|H|$, contradicting the fact that $G$ is well covered.
\end{proof}

\begin{corollary}\label{crown_wc}
Let $G$ be a well covered graph without isolated vertices whose vertex cover number equals $k$,  and with at least $3k + 1$ vertices. There is a polynomial-time algorithm that finds a crown decomposition of $G$ such that
\begin{itemize}
\item  $|C|=|H|\leq k$; and
\item $G[C\cup H]$ and $G[R]$ are well covered.
\item $G[R]$ has no isolated vertex.
\end{itemize}
\end{corollary}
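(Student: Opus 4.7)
The plan is to start from the crown decomposition produced by Corollary~\ref{cor_wc}, which already guarantees $|H| \leq k$, and then refine it to satisfy the remaining two properties. By Lemma~\ref{partitions_wc}, both $G[R]$ and $G[C \cup H]$ are already well covered, so that condition is immediate. What is left is to ensure that $G[R]$ has no isolated vertex and that $|C| = |H|$.

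For the first remaining property, I would iteratively push isolated vertices of $G[R]$ into $C$. If $v \in R$ is isolated in $G[R]$, then since $v$ has no neighbors in $C$ (by the crown property) and no neighbors in $R$ (by assumption) but is not isolated in $G$, all neighbors of $v$ must lie in $H$. Moving $v$ from $R$ to $C$ preserves every axiom of a crown decomposition: $C$ stays independent, no new edges appear between the enlarged $C$ and the smaller $R$, and the matching of $H$ into $C$ is unchanged. Iterating this operation a polynomial number of times produces a refined decomposition in which $G[R]$ has no isolated vertex and in which $|H|$, being untouched, is still at most $k$. Reapplying Lemma~\ref{partitions_wc} to the refined decomposition confirms that $G[R]$ and $G[C \cup H]$ remain well covered.

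For the equality $|C| = |H|$, the idea is to apply Lemma~\ref{same_size} not to $G$ itself but to the induced subgraph $G[C \cup H]$. In that subgraph, $(C, H, \emptyset)$ is a valid crown decomposition: every $h \in H$ is matched into $C$, and every $c \in C$ has a neighbor in $H$ (for the original vertices of $C$, because $G$ has no isolated vertex and there are no edges from $C$ to $R$; for the vertices promoted from $R$, because all their neighbors were in $H$ by the argument above). Hence $G[C \cup H]$ has no isolated vertex and is well covered, and Lemma~\ref{same_size} yields $|C| = |H|$, which combined with $|H| \leq k$ gives the first bullet.

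The main subtlety is making sure the two refinements cooperate: the step that absorbs isolated vertices of $G[R]$ into $C$ must be carried out \emph{before} invoking Lemma~\ref{same_size}, and one has to verify that this relocation leaves the crown structure intact and does not introduce isolated vertices in $G[C \cup H]$. Once that is confirmed, the three bulleted conclusions follow directly from Corollary~\ref{cor_wc}, Lemma~\ref{partitions_wc}, and Lemma~\ref{same_size} applied to $G[C\cup H]$.
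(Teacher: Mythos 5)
Your proof is correct, and its overall skeleton matches the paper's: obtain the decomposition from Corollary~\ref{cor_wc}, get well coveredness of $G[R]$ and $G[C\cup H]$ from Lemma~\ref{partitions_wc}, and get $|C|=|H|$ by viewing $(C,H,\emptyset)$ as a crown decomposition of $G[C\cup H]$ and invoking Lemma~\ref{same_size}. Where you genuinely diverge is the third bullet. The paper argues by contradiction that the decomposition returned by Corollary~\ref{cor_wc} \emph{already} has no isolated vertex in $G[R]$: if $\ell\in R$ were isolated in $G[R]$, it has a neighbor $u\in H$, and one builds two maximal independent sets of $G$ of different sizes (namely $S\cup C$ for a maximum independent set $S$ of $G[R]$, versus $S_1\cup S_2$ with $S_1\ni u$ maximal in $G[C\cup H]$ and $S_2$ maximal in $G[R\setminus N(u)]$, which misses $\ell$), contradicting well coveredness. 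You instead \emph{repair} the decomposition constructively: any vertex isolated in $G[R]$ has all its neighbors in $H$, so it can be absorbed into the crown $C$ without violating any of the four crown axioms, without touching $H$ (so $|H|\leq k$ survives), and without creating new isolated vertices in $G[R]$; then Lemma~\ref{partitions_wc} and Lemma~\ref{same_size} are applied to the refined decomposition, whose hypotheses (including that $G[C\cup H]$ has no isolated vertices) you verify correctly. Your route is more elementary for that bullet, since it does not use well coveredness of $G$ to secure it; the paper's route buys the stronger, and later convenient, fact that the original decomposition itself needs no modification (indeed, combining your argument with Lemma~\ref{same_size} shows a posteriori that your absorption step is vacuous when $G$ is well covered). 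Both are valid polynomial-time constructions satisfying all three bullets simultaneously.
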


\begin{proof}
By Corollary~\ref{cor_wc} there is a polynomial-time algorithm that finds a crown decomposition of $G$ such that $|H|\leq k$. By Lemma~\ref{partitions_wc} we know that $G[C\cup H]$ and $G[R]$ are well covered. Let $C'=C$, $H'=H$ and $R'=\emptyset$ be a crown decomposition of $G[C\cup H]$. By Lemma~\ref{same_size} follows that $|C|=|H|$. 

Now, it remains to show that $G[R]$ has no isolated vertex. 

Suppose that $G$ is well covered and $G[R]$ has an isolated vertex $\ell$. Any maximal independent set of $G[R]$ contains $\ell$. Let $S$ be a maximum independent set of $G[R]$. It is easy to see that $S\cup C$ is a maximum independent set of $G$. As $\ell$ is an isolated vertex in $G[R]$ and $G$ has no isolated vertex, then $\ell$ has a neighbor $u$ in $H$. Taking a maximal independent set $S_1$ of $G[C\cup H]$ that contains $u$, and a maximal independent $S_2$ of $G[R\setminus N(u)]$, it follows that $S_1\cup S_2$ is a maximal independent set of $G$. Note that $|S_1|=|C|$, because $G[C\cup H]$ is well covered, but $S_2$ has fewer vertices than $S$, since it does not contain $\ell$. Then, $G$ is not well covered.
\end{proof}

Now, we show a kernel having at most $5k$ vertices to the problem of determining whether $G$ is well covered.

\begin{theorem}\label{linearkernel}
Let $G$ be a graph without isolated vertices whose vertex cover number equals $k$. It holds that either
\begin{itemize}
\item $G$ is not well covered; or
\item $G$ has at most $5k$ vertices.
\end{itemize}
\end{theorem}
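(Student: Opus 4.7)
The plan is to prove the contrapositive by induction on $k = vc(G)$: if $G$ is well covered, has no isolated vertices, and $vc(G) = k$, then $|V(G)| \leq 5k$. The base case $k=0$ is immediate, because a graph with no edges and no isolated vertices has no vertices. So from now on assume $k \geq 1$.

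For the inductive step I would split on the size of $|V(G)|$. If $|V(G)| \leq 3k$ there is nothing to prove. Otherwise $|V(G)| \geq 3k+1$, so Corollary \ref{crown_wc} applies and yields, in polynomial time, a crown decomposition $(C,H,R)$ with $|C|=|H|\leq k$, with $G[C\cup H]$ and $G[R]$ both well covered, and with $G[R]$ free of isolated vertices. These three properties are exactly what is needed to invoke the inductive hypothesis on $G[R]$, once we control its vertex cover number.

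The next step is to show $vc(G[R]) \leq k - |H|$. This uses the standard crown fact (also invoked in Lemma~\ref{same_size}) that some minimum vertex cover $K$ of $G$ contains $H$; since $C$ is independent and there are no $C$--$R$ edges, the only edges of $G$ not covered by $H$ lie inside $G[R]$, so $K\setminus H$ is a vertex cover of $G[R]$ of size at most $k-|H|$. Because $|H|\geq 1$ (the crown $C$ is nonempty and $|H|=|C|$), the inductive parameter is strictly smaller, and the hypothesis gives $|V(G[R])|\leq 5(k-|H|)$.

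Finally I would assemble the bound: $|V(G)| = |C|+|H|+|V(G[R])| = 2|H| + |V(G[R])| \leq 2|H| + 5(k-|H|) = 5k - 3|H| \leq 5k$. The only potential obstacle is making sure $G[R]$ actually satisfies the three inductive preconditions (well covered, no isolated vertices, strictly smaller vertex cover number), but the first two are packaged directly into Corollary~\ref{crown_wc} and the third follows from the one-line argument above, so the induction goes through cleanly.
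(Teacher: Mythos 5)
Your proof is correct, but it takes a genuinely different route from the paper. The paper fixes a crown decomposition of $G$ that maximizes $|C|$, and argues that if $G[R]$ still had more than $3k$ vertices, the Crown Lemma would give a crown decomposition $(C',H',R')$ of $G[R]$, and merging it with $(C,H,R)$ would produce a crown decomposition of $G$ with a strictly larger crown --- a contradiction; hence $|V(R)|\leq 3k$ and, with $|C|+|H|\leq 2k$ from Corollary~\ref{crown_wc}, $|V(G)|\leq 5k$. You instead run an induction on $k=vc(G)$: apply Corollary~\ref{crown_wc} once, observe that a minimum vertex cover $K\supseteq H$ of $G$ restricts to a vertex cover of $G[R]$ of size at most $k-|H|$ (strictly speaking you should take $K\cap R$ rather than $K\setminus H$, since the latter may contain vertices of $C$, but the bound $vc(G[R])\leq k-|H|$ is what matters), and invoke the (strong) inductive hypothesis on $G[R]$, whose well-coveredness and lack of isolated vertices are exactly what the corollary supplies. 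Both arguments are sound; the paper's extremal argument bounds $|R|$ in one shot and stays within the kernelization picture, whereas your induction sidesteps the paper's slightly delicate step of passing to a crown-maximal decomposition (and re-justifying its properties) and even yields the marginally sharper bound $|V(G)|\leq 5k-3|H|$.
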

\begin{proof}
Suppose that $G$ is well covered and has at least $3k+1$ vertices. By Corollary~\ref{crown_wc} we can decompose $G$ into $C$, $H$ and $R$ such that $|C|+|H|\leq 2k$ and $G[R]$ is a well covered graph without isolated vertices. 

Let $C, H, R$ be a crown decomposition of $G$ that maximizes the size of the crown $C$. 

As $G[R]$ is a well covered graph without isolated vertices, then either $G[R]$ has at most $3k$ vertices; or $G[R]$ admits a crown decomposition into $C'$, $H'$ and $R'$.

If $G[R]$ has at most $3k$ vertices then $G$ has at most $5k$ vertices.

Otherwise, $G[R]$ admits a crown decomposition into $C'$, $H'$ and $R'$ and the following holds:
\begin{itemize}
\item $C \cup C'$ is nonempty;
\item $C \cup C'$ is an independent set, because $C'\subseteq R$.
\item There are no edges between vertices of $C \cup C'$ and $R'$;
\item $G$ contains a matching of $H\cup H'$ into $C \cup C'$.
\end{itemize}
Therefore, $C \cup C'$, $H\cup H'$ and $R'$ is also a crown decomposition of $G$. Since, by definition, $C'$ is nonempty then $C \cup C'$, $H\cup H'$ and $R'$ is a crown decomposition of $G$ with a larger crown. Thus, we have a contradiction.
\end{proof}

\section{The size of a maximum minimal vertex cover as parameter}

Following the results of \cite{boria15}, there is also an FPT algorithm to compute a maximum minimal vertex cover parameterized by $vc^+=vc^+(G)$ (the size of a maximum minimal vertex cover). Its time is $O^*(1.5397^{vc^+})$. In the following, we obtain a faster FPT algorithm with time $O^*(1.4656^{vc^+})$ to decide well-coveredness using the classical FPT algorithm for vertex cover (see \cite{cygan2015parameterized}).

\begin{theorem}\label{teo-fpt2}
Given a graph $G$ and $vc^+(G)$, it is possible to decide whether $G$ is well covered in time $O(1.4656^{vc+} \cdot n^2)$.
\end{theorem}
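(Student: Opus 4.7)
The plan is to observe that, since $vc(G)\leq vc^+(G)$ always, $G$ is well covered if and only if $G$ has no vertex cover of size at most $vc^+(G)-1$. Because $vc^+(G)$ is given as part of the input, this reduces the decision of well coveredness to a single call of a parameterized \textsc{Vertex Cover} algorithm with budget $k := vc^+(G)-1$.

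Next, I would invoke the classical bounded search-tree algorithm for \textsc{Vertex Cover} referenced in the statement (see, e.g., \cite{cygan2015parameterized}). At each node of the search tree one applies polynomial-time simplification rules: delete isolated vertices, force the unique neighbor of any degree-one vertex into the cover, and apply the standard folding operation at any degree-two vertex, which preserves the answer while decreasing the budget by $1$. Once every remaining vertex has degree at least $3$, one branches on such a vertex $v$: either $v$ lies in the cover (subcall with budget $k-1$), or $N(v)$ lies entirely in the cover (subcall with budget at most $k-3$). This branching vector $(1,3)$ satisfies the recurrence $T(k) = T(k-1) + T(k-3)$, whose characteristic equation $x^{3} = x^{2} + 1$ has positive real root $\approx 1.4656$; hence the search tree has $O(1.4656^{k})$ leaves.

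Since each node performs the reductions and branching in $O(n)$ time and the polynomial bookkeeping across the whole execution can be controlled to $O(n^{2})$, the total running time is $O(1.4656^{vc^+(G)-1}\cdot n^{2}) = O(1.4656^{vc^+(G)}\cdot n^{2})$. The algorithm reports that $G$ is well covered iff the \textsc{Vertex Cover} call returns that no cover of size at most $vc^+(G)-1$ exists.

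The only point that needs attention is that the degree-two folding must be safely composed with the budget accounting, so that the $(1,3)$ branching analysis actually applies; this is standard in the FPT vertex-cover literature, so no genuinely new combinatorial ingredient is required and the main work is the careful invocation of the known algorithm with the right budget.
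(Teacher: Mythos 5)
Your proposal is correct, and it takes a genuinely different (and cleaner) logical route than the paper. You exploit the hypothesis that $vc^+(G)$ is part of the input: since $vc(G)\leq vc^+(G)$ always holds, $G$ is well covered if and only if $G$ has no vertex cover of size at most $vc^+(G)-1$, so one black-box call to the classical bounded-search-tree algorithm for \textsc{Vertex Cover} with budget $vc^+(G)-1$ (branching vector $(1,3)$, root $\approx 1.4656$, see \cite{cygan2015parameterized}) settles the question within the claimed time; your minor caveat about composing degree-two folding with the budget accounting is indeed standard, and in fact you could avoid folding altogether by solving maximum-degree-two instances in polynomial time. The paper instead re-implements the branching from scratch: it maintains a partial cover $C_h$, branches on vertices of degree at least $3$ with budget $k=vc^+(G)$, and at leaves of maximum degree at most $2$ extends $C_\ell$ to minimal vertex covers, thereby computing both $vc(G)$ and the maximum minimal vertex cover inside the same tree and comparing them. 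What each buys: your reduction is shorter, needs no reasoning about minimality at all, and is modular -- plugging in a faster \textsc{Vertex Cover} routine (e.g.\ the $O^*(1.2738^k)$ algorithm of Chen, Kanj and Xia) immediately improves the bound to $O^*(1.2738^{vc^+})$, which the paper's self-contained enumeration does not automatically give; on the other hand, the paper's algorithm yields more information (it recovers $vc(G)$ and a maximum minimal vertex cover, not just the yes/no answer) and would still work if only an upper bound on $vc^+(G)$ were supplied, whereas your equivalence needs the exact value of $vc^+(G)$ -- which the theorem statement does provide.
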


\begin{proof}
Let $G$ be a graph and let the parameter $k=vc^+(G)$.
The algorithm uses a search tree where each node $h$ has an associated graph $G_h$, a parameter $k_h$ and an associated ``\emph{partial vertex cover}'' $C_h$ in such a way that every edge of $G-G_h$ is covered by $C_h$. In the root $r$, the associated graph $G_r$ is equal to $G$, the parameter $k_r=k$ and $C_r$ is the empty set $\emptyset$. In the following, we define $G_h$ recursively for any non-root node $h$ of the search tree.
We say that a node $\ell$ is a leaf if and only if its associated graph $G_\ell$ has maximum degree at most 2 or its associated parameter $k_\ell=0$.

Let $h$ be a non-leaf node.
Then $G_h$ has a vertex $x_h$ with degree at least 3. Notice that there is no minimal vertex cover containing $N_{G_h}[x_h]$, since removing $x_h$ we also have a vertex cover. Then we have two possibilities: (1) $x_h$ in the vertex cover; or (2) $x_h$ not in the vertex cover and consequently $N_{G_h}(x_h)$ in the vertex cover, if $k_h\geq|N_{G_h}(x_h)|$.
We branch $h$ according to these two possibilities.

In the first child $h_1$, let $C_{h_1}=C_h\cup\{x_h\}$, $k_{h_1}=k_h-1$ and $G_{h_1}$ is obtained from $G_h$ by removing the vertex $x_h$. In words, we include $x_h$ to the partial vertex cover $C_{h_1}$ of $G_{h_1}$, and remove from $G_h$ the vertex $x_h$ to obtain $G_{h_1}$.

If $k_h\geq|N_{G_h}(x_h)|$, the node $h$ has a second child $h_2$ with $C_{h_2}=C_h\cup N_{G_h}(x_h)$, $k_{h_2}=k_h-|N_{G_h}(x_h)|$ and $G_{h_2}$ is obtained from $G_h$ by removing $N_{G_h}[x_h]$. In words, we include $N_{G_h}(x_h)$ to the partial vertex cover $C_{h_2}$ of $G_{h_2}$, and remove $N_{G_h}[x_h]$ from $G_h$ to obtain $G_{h_2}$.

For each leaf $\ell$ with $k_\ell=0$ and $G_\ell$ empty (no edges), we have that $C_\ell$ is a vertex cover of $G$. For each leaf $\ell$ with $k_\ell=0$ and $G_\ell$ non-empty, we have that $C_\ell$ has $k$ vertices and is not a vertex cover of $G$ (in this case, such a leaf is ignored).
For each leaf $\ell$ with $k_\ell>0$, $G_\ell$ has maximum degree at most 2 and then it is a simple graph consisting of paths, cycles or isolated vertices and it is possible to obtain a minimum and a maximum superset of $C_\ell$ which are minimal vertex covers of $G$ in linear time, if they exist. With this, we can obtain a minimum vertex cover (size $vc(G)$) and a maximum minimal vertex cover (size $vc^+(G)$) by looking the vertex covers obtained by the leaves which are minimal. With this, we can decide if $G$ is well covered.

Let $T(k_h)$ be the number of leaves in the subtree with root $h$.
Then $T(k_h)=T(k_{h_1})+T(k_{h_2})$ if $h$ has two children. Otherwise, $T(k_h)=T(k_{h_1})$. Moreover, for each leaf $\ell$, $T(k_\ell)=1$.
Since $|N(x_h)|\geq 3$, then $T(k_h)\leq T(k_h-1)+T(k_h-3)$.
Using induction on $k$, we have that $T(k)\leq 1.4656^{k}$.
Then the search tree has at most $1.4656^{k}$ leaves and consequently at most $2\cdot 1.4656^{k}+1$ nodes, since each node has at most two children. Since every non-leaf node takes time at most $O(n^2)$ and every leaf $\ell$ takes time $O(n^2)$ to decide if $C_\ell$ is a minimal vertex cover, we have that the total time is $O(1.4656^{k}\cdot n^2)$ and we are done.

Now consider a non-leaf node $h$ such that all vertices of $G_h$ have degree at most two.
If $G_h$ is connected, then $G_h$ is either a path or a cycle, and a minimum vertex cover and a maximum minimal vertex cover can be computed in $O(n^2)$ time. If $G_h$ is not connected, then it is a disjoint union of paths and cycles and we have the same time to compute both parameters.
\end{proof}

\begin{corollary}
Let $G$ be a graph without isolated vertices. The problem of determining whether $G$ is well covered admits a kernel having at most $5\cdot vc^+(G)$ vertices. 
\end{corollary}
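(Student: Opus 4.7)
The plan is to deduce this corollary directly from Theorem~\ref{linearkernel} via the trivial inequality $vc(G) \leq vc^+(G)$. Theorem~\ref{linearkernel} already guarantees that any graph $G$ without isolated vertices is either not well covered or satisfies $|V(G)| \leq 5 \cdot vc(G)$, so the combinatorial bound $|V(G)| \leq 5 \cdot vc^+(G)$ will be immediate whenever $G$ is well covered. The task then reduces to making the corresponding procedure run in polynomial time using only $vc^+(G)$ as the parameter, rather than the (hard-to-compute) value $vc(G)$.

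To do so, I would rerun the iterative crown-decomposition procedure from the proof of Theorem~\ref{linearkernel}, but supplying $k = vc^+(G)$ as the target parameter in each invocation of the Crown Lemma. The crucial point is that the matching number of any subgraph of $G$ is bounded by $vc(G) \leq vc^+(G) = k$, so the ``matching of size $k+1$'' alternative of the Crown Lemma can never occur. Hence, as long as the current graph has more than $3k$ vertices, the algorithm is forced to return a crown decomposition, and the bound $|H| \leq vc(G) \leq vc^+(G)$ follows exactly as in Corollary~\ref{cor_wc}, since $H$ is contained in some minimum vertex cover.

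The remaining structural ingredients---Lemma~\ref{partitions_wc}, Lemma~\ref{same_size}, and Corollary~\ref{crown_wc}---make no reference to the specific parameter value and carry over verbatim. Iterating the crown extraction exactly as in the proof of Theorem~\ref{linearkernel}, a maximum crown $(C,H,R)$ yields a remainder $G[R]$ admitting no crown decomposition with parameter $k$, so $|R| \leq 3 \cdot vc^+(G)$, while $|C| = |H| \leq vc^+(G)$; summing gives $|V(G)| \leq 5 \cdot vc^+(G)$. Whenever any well-coveredness-preserving property fails along the way (such as $|C| \neq |H|$ in a decomposition with empty remainder, or the appearance of an isolated vertex in $G[R]$), the algorithm outputs a trivial no-instance. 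I do not foresee any substantive obstacle: the entire argument is a parameter substitution into the already-established Theorem~\ref{linearkernel}, and the only point requiring verification is that the Crown Lemma remains sound when invoked with an upper bound on $vc(G)$ rather than its exact value.
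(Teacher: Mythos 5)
Your proposal is correct and follows essentially the same route as the paper: the paper's own proof is exactly the observation that the corollary follows from Theorem~\ref{linearkernel} together with the inequality $vc^+(G)\geq vc(G)$. The extra detail you give about rerunning the crown-decomposition machinery with $k=vc^+(G)$ (noting that the matching branch of the Crown Lemma cannot fire) is a sound elaboration of the algorithmic content, but it is not a different argument.
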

\begin{proof}
It follows from Theorem~\ref{linearkernel} and the fact that $vc^+(G) \geq vc(G)$.
\end{proof}

\section{The size of a maximum independent set as parameter}

The local-treewidth (see \cite{eppstein00}) of a graph $G$ is the function $ltw_G:\NN\to\NN$ which associates with any $r\in\NN$ the maximum treewidth of an $r$-neighborhood in $G$. That is, $ltw_G(r)=\max_{v\in V(G)}\{tw(G[N_r(v)]\}$, where $N_r(v)$ is the set of vertices at distance at most $r$ from $v$. We say that a graph class $\mathcal{C}$ has bounded local-treewidth if there is a function $f_\mathcal{C}:\NN\to\NN$ such that, for all $G\in\mathcal{C}$ and $r\in\NN$, $ltw_G(r)\leq f_\mathcal{C}(r)$.
It is known that graphs with bounded genus or bounded maximum degree have bounded local-treewidth (see  \cite{eppstein00}). In particular, a graph with maximum degree $\Delta$ has $ltw_G(r)\leq\Delta^r$ and a planar graph has $ltw_G(r)\leq 3r-1$ (see \cite{bodlaender98}).

In this section, we consider bounded local-treewidth graphs and $d$-degenerate graphs, both classes include graphs with bounded genus and graphs with bounded maximum degree.

\begin{theorem}
Given a graph $G$ having bounded local-treewidth, the problem of determining whether $G$ is well covered is FPT when parameterized by $\alpha=\alpha(G)$. More precisely, it can be solved in $O(f(\alpha) \cdot n^2)$ time.
\end{theorem}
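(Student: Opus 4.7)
My plan is to bound $|V(G)|$ by a function of $\alpha$ and then decide well-coveredness by brute force. The key step combines the hypothesis of bounded local-treewidth, which forces the clique number of $G$ to be bounded, with the parameter $\alpha(G) \leq k$ via Ramsey's theorem, yielding $n \leq g(k)$ for some explicit function $g$.

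First I would observe that bounded local-treewidth implies bounded clique number. If $K \subseteq V(G)$ is a clique of size $s$ and $v \in K$, then every other vertex of $K$ lies at distance $1$ from $v$, so $K \subseteq N_1(v)$; the induced subgraph $G[N_1(v)]$ therefore contains $K_s$ and has treewidth at least $s-1$. Writing $f_{\mathcal{C}}$ for the local-treewidth function of the class, the hypothesis $ltw_G(1) \leq f_{\mathcal{C}}(1)$ gives $\omega(G) \leq f_{\mathcal{C}}(1) + 1 =: \omega_0$, a constant depending only on the class.

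Next, since $\omega(G) \leq \omega_0$ and $\alpha(G) \leq k$, Ramsey's theorem gives $n < R(\omega_0 + 1, k+1)$, so $n \leq g(k)$ for some function $g$ (one can take $g(k) = k^{O(\omega_0)}$ from standard Ramsey estimates). With $n$ bounded by a function of $\alpha$, I would decide well-coveredness by enumerating all $2^n$ subsets of $V(G)$, testing each in $O(n+m) = O(n^2)$ time for being a maximal independent set, and verifying that all such sets have the same cardinality. The total running time is $O(2^n \cdot n^2) = O(f(\alpha) \cdot n^2)$ with $f(\alpha) = 2^{g(\alpha)}$, matching the statement. The only conceptual step is the opening observation that bounded local-treewidth yields a bounded clique number; the Ramsey-based bound on $n$ and the brute-force enumeration that follow are standard, and I do not foresee any real obstacle.
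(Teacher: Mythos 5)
Your proof is correct, but it takes a genuinely different route from the paper. The paper's proof writes well coveredness as a first-order formula $WellCov$ whose size is bounded by a function of $\alpha$ (a conjunction of formulas $WellCov_k$ asserting that there is no independent set of size $k$ together with a maximal independent set of size $k-1$), and then invokes the Frick--Grohe meta-theorem for FO model checking on classes of bounded local-treewidth to get the $O(f(\alpha)\cdot n^2)$ bound. You instead use the hypothesis only through $ltw_G(1)\leq f_{\mathcal{C}}(1)$: a clique lies in the closed neighborhood of any of its vertices, so the clique number is at most $f_{\mathcal{C}}(1)+1$, and then Ramsey's theorem bounds $n$ by roughly $\binom{\omega_0+\alpha}{\omega_0}=O(\alpha^{\omega_0})$, after which brute-force enumeration of all subsets decides well coveredness; each step checks out, and the total time $O(2^{g(\alpha)}\cdot n^2)$ is of the required form. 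What your argument buys: it is elementary (no logic meta-theorem), it in effect shows the instance itself is a kernel of size polynomial in $\alpha$ for each fixed class, it yields an explicit $2^{\mathrm{poly}(\alpha)}$ parameter dependence instead of the non-elementary function hidden in Frick--Grohe, and it actually proves a stronger statement, since only bounded clique number of the class is used (bounded local-treewidth enters only via $r=1$), which is consistent with the coW[2]-hardness for general graphs, whose hard instances need large cliques. What the paper's approach buys: it is a reusable meta-technique --- the same FO expressibility argument immediately extends to other first-order definable properties on bounded local-treewidth classes --- and it does not pass through the Ramsey blow-up, keeping the argument uniform in the class function $f_{\mathcal{C}}$ rather than just $f_{\mathcal{C}}(1)$.
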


\begin{proof}
In the following, we express the well coveredness problem in First Order logic.
We use lower case variables $x,y,z,\ldots$ (resp. upper case variables $X,Y,Z,\ldots$) to denote vertices (resp. subsets of vertices) of a graph. The \emph{atomic formulas} are $x=y$, $x\in X$ and $E(x,y)$ which denotes the adjacency relation in a given graph. We say that a logic formula is FO (first order) if it is formed from atomic formulas with Boolean connectives $\wedge,\ \vee,\ \neg,\ \to,\ \leftrightarrow$ and \emph{element quantifications} $\exists x$ and $\forall x$. 

Consider the formula $Indep(X)$ which is true if and only if $X$ is an independent set:
\[
  Indep(X)\ :=\ \forall x,y\ (x\in X\And y\in X)\to\neg E(x,y)
\]
Also consider the formula $Maximal(X)$ which is true if $X$ is not properly contained in an independent set.
\[
  Maximal(X)\ :=\ \forall y\ \exists x\ (y\not\in X)\to (x\in X)\And E(x,y),
\]
where $(y\not\in X):=\neg(y\in X)$.

Given a graph $G$, it is not well covered if and only if $G$ has a maximal independent set $Y$ and a maximum independent set $X'$ with $|X'|\geq|Y|+1$. Thus $X'$ has a subset $X$ with $|X|=|Y|+1$, which is clearly independent.

With this, given a positive integer $k$, let $WellCov_{k}$ be the following first order formula, which is true if and only if the graph $G$ does not have two independent sets $X$ and $Y$ with $|X|=k$, $|Y|=k-1$ and $Y$ being maximal:
\[
WellCov_k\ :=\ \forall x_1,\ldots,x_k\ \forall y_1,\ldots,y_{k-1}\ 
\left(\AND_{1\leq i<j\leq k} x_i\ne x_j\right)\ \And\ Indep(\{x_1,\ldots,x_k\})
\]
\[
\to\ \neg\Big(Indep(\{y_1,\ldots,y_{k-1}\})\ \And\ Maximal(\{y_1,\ldots,y_{k-1}\}\Big)
\]

Notice that $WellCov_k$ contains $2k-1$ variables.

Now let $\alpha=\alpha(G)$. As mentioned before, if $G$ is not well covered, then there are independent sets $X$ and $Y$ with $2\leq |X|\leq\alpha$, $|Y|=|X|-1$ and $Y$ being maximal. With this, let $WellCov$ be the following first order formula, which is true if and only if $G$ is well covered:
\[
  WellCov\ :=\ \AND_{2\leq k\leq\alpha}\ WellCov_k.
\]

Then the well covered decision problem is first order expressible. Moreover, $WellCov$ contains at most $\alpha^2$ variables and then the size of the expression $WellCov$ is a function of $\alpha$.
We then can apply the Frick-Grohe Theorem (see Chapter 14 of \cite{downey-fellows13}) to prove that the well coveredness decision problem is FPT with parameter $\alpha(G)$ in time $O(n^2)$ for graphs with bounded local treewidth.
\end{proof}

\vspace{0.5cm}

The last theorem is a general result for bounded local treewidth graphs. We can obtain specific FPT algorithms (parameterized by $\alpha(G)$) for $d$-degenerate graphs, such as planar graphs, bounded genus graphs and bounded maximum degree graphs. 
A graph is called $d$-\emph{degenerate} if every induced subgraph has a vertex with degree at most $d$. The \emph{degeneracy} of a graph $G$ is the smallest $d$ such that $G$ is $d$-degenerate.
For example, outerplanar graphs, planar graphs and graphs with bounded maximum degree $\Delta$ have degeneracy at most $2$, $5$ and $\Delta$, respectively.



\begin{theorem}\label{teo-hered}
The problem of determining whether a given graph $G$ is well covered is FPT when parameterized by $\alpha=\alpha(G)$ and the degeneracy of $G$. More precisely, it can be solved in $O((d+1)^\alpha\cdot(m+n))$ time.
\end{theorem}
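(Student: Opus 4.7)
The plan is to use a bounded-depth branching algorithm that enumerates all maximal independent sets of $G$ by repeatedly pivoting on a low-degree vertex, which exists because $d$-degeneracy is hereditary. Since every maximal independent set has size at most $\alpha$, the recursion depth is at most $\alpha$; and since the pivot has at most $d$ neighbors in the residual graph, the branching factor is at most $d+1$. This gives $(d+1)^\alpha$ leaves, each producing (with multiplicity) a maximal independent set of $G$, and well coveredness is decided by checking whether all recorded sizes coincide.

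More concretely, I would maintain a partial independent set $I$ and its closed neighborhood $D=N[I]$, and work with the residual graph $G' = G[V(G)\setminus D]$. If $G'$ is empty, $I$ is a maximal independent set of $G$ and we record $|I|$. Otherwise, using $d$-degeneracy of $G'$ (which follows because induced subgraphs of $d$-degenerate graphs are $d$-degenerate), I pick a vertex $v\in V(G')$ of degree at most $d$ in $G'$. Then in any maximal independent set $I^* \supseteq I$ of $G$, some vertex of $N_{G'}[v]$ must lie in $I^*$; otherwise $v$ could be added, contradicting maximality. The algorithm branches on this choice, trying each $u\in N_{G'}[v]$ (at most $d+1$ options), and recurses with $I\cup\{u\}$ and $D\cup N_G[u]$.

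For the complexity bound, each recursive call strictly increases $|I|$ by one, so the tree has depth at most $\alpha(G)$, giving at most $(d+1)^\alpha$ leaves. Each node performs $O(m+n)$ work to find a low-degree vertex in the residual graph (for instance, by recomputing a degeneracy ordering, or by using a bucket queue updated incrementally) and to update $I$ and $D$, yielding the claimed $O((d+1)^\alpha\cdot(m+n))$ total running time. Correctness relies on the inductive observation that every maximal independent set of $G$ arises as a leaf of some branch: the vertex that covers the pivot $v$ in that maximal independent set is always among the $(d+1)$ candidates we try.

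The only delicate point is verifying that it suffices to compare the recorded sizes without knowing $\alpha(G)$ in advance. This is immediate from the definition: a graph is well covered exactly when all maximal independent sets have the same cardinality, so the algorithm returns \emph{well covered} iff all leaves record the same value, and otherwise two leaves with differing sizes certify non-well-coveredness. No explicit computation of $\alpha(G)$ is needed, and the algorithm runs within the stated bound.
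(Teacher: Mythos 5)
Your proposal is correct and follows essentially the same strategy as the paper: branch over the at most $d+1$ vertices in the closed neighborhood of a minimum-degree vertex of the residual graph, so the search tree has depth at most $\alpha$ and $(d+1)^\alpha$ leaves, and well coveredness is decided by comparing the sizes (equivalently, depths) recorded at the leaves. Your explicit justification that every maximal independent set hits the pivot's closed neighborhood, and that $\alpha$ need not be known in advance, only makes the paper's argument more precise.
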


\begin{proof}
Let $G$ be a $d$-degenerate graph.
The algorithm uses a search tree $T$ with height $\alpha=\alpha(G)$ where each node $h$ of $T$ has an associated graph $G_h$. The associated graph of the root $r$ of $T$ is the original graph $G_r=G$. A leaf is a node such that its height is $\alpha$ or its associated graph is empty.

Let $h$ be a non-leaf node with associated graph $G_h$. We branch $h$ according to a vertex $v$ with minimum degree in the associated graph of $h$. Let $N_{G_h}[v]=\{u_1,\ldots,u_\ell\}$, where $\ell=|N_{G_h}[v]|\leq d+1$. With this, the node $h$ will have $\ell+1$ child nodes $h_1,h_2,\ldots,h_\ell$ in the search tree. In the child node $h_i$ ($1\leq i\leq \ell$), let $G_{h_i}=G_h$ initially and remove $N_{G_h}[u_i]$ from the associated graph $G_{h_i}$, which is also $d$-degenerate.

If there are two leaf nodes with different heights, return NO (since $G$ has a maximal independent set which is not maximum and then $G$ is not well covered). Otherwise, return YES.
Notice that the tree height is at most $\alpha(G)$ and each node has at most $d+1$ child nodes. Therefore, the search tree has at most $(d+1)^\alpha$ nodes and the total time is $O((d+1)^\alpha\cdot(m+n))$, since every node takes time $O(m+n)$.
\end{proof}

\vspace{0.5cm}

With this, we obtain the following corollary for graphs with bounded genus.

\begin{corollary}
Given a graph $G$ with bounded genus, the problem of determining whether $G$ is well covered can be solved in $O(7^\alpha\cdot(m+n))$ time.
\end{corollary}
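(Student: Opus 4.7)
The plan is to combine Theorem~\ref{teo-hered} with a standard degeneracy bound for bounded genus graphs. Concretely, I want to argue that every graph $G$ of bounded genus has degeneracy at most $6$; plugging $d = 6$ into the running time $O((d+1)^\alpha\cdot(m+n))$ of Theorem~\ref{teo-hered} then yields the stated $O(7^\alpha\cdot(m+n))$ bound.

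First I would recall the consequence of Euler's formula: any graph of orientable genus at most $g$ with $n \geq 3$ vertices satisfies $m \leq 3n + 6(g-1)$. Summing degrees, this gives an average degree of at most $6 + 12(g-1)/n$, so for $n$ sufficiently large relative to the fixed bound on $g$ there is always a vertex of degree at most $6$. Since the genus is monotone under taking subgraphs, the same inequality applies to every induced subgraph $H$ of $G$ of sufficient size, so every such $H$ contains a vertex of degree at most $6$. The finitely many induced subgraphs on fewer than some constant number of vertices can be absorbed into the constant of the degeneracy bound, yielding degeneracy at most $6$ uniformly.

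Once degeneracy is bounded by $6$, the remainder of the proof is immediate: invoke Theorem~\ref{teo-hered} with $d=6$, which states that well coveredness can be decided in $O((d+1)^\alpha\cdot(m+n))$ time for any $d$-degenerate graph, and observe that $(d+1)^\alpha = 7^\alpha$. Since a planar embedding (or bounded genus embedding) is not actually required to run the algorithm, only the degeneracy bound, there is nothing further to verify.

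The main potential obstacle is a clean treatment of the degeneracy constant: the inequality $6 + 12(g-1)/n$ does not directly give $6$ for small $n$, so I would phrase the argument in terms of the well-known fact that every bounded genus class is $6$-degenerate (equivalently, has an acyclic orientation of out-degree at most $6$) and cite this rather than re-derive it from scratch. With that fact in hand, the corollary reduces to a single application of Theorem~\ref{teo-hered}.
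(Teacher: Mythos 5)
There is a genuine gap: the ``well-known fact'' you lean on --- that every class of bounded genus is $6$-degenerate --- is false once the genus bound is at least $2$. For instance, $K_8$ embeds in the orientable surface of genus $2$, and it is $7$-regular, so its degeneracy is $7$; more generally, by the Ringel--Youngs theorem the complete graph on roughly $\Theta(\sqrt{g})$ vertices has genus at most $g$, so the degeneracy of genus-$g$ graphs grows like $\Theta(\sqrt{g})$ rather than being capped at $6$. Euler's formula only yields $m\leq 3n+6g-6$, hence a vertex of degree at most $6$ is guaranteed only when $n$ is large compared to $g$ (the paper uses $n\geq 12g$); for smaller subgraphs the minimum degree can genuinely exceed $6$. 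Your remark that the finitely many small subgraphs ``can be absorbed into the constant of the degeneracy bound'' does not make sense as stated: degeneracy is a single integer attached to the graph, not a quantity defined up to constants, and for $g\geq 2$ it really is larger than $6$. Plugging the true degeneracy $d(g)>6$ into Theorem~\ref{teo-hered} would give $(d(g)+1)^\alpha$, which is exponentially worse than $7^\alpha$ in $\alpha$, not merely worse by a constant factor, so the corollary does not follow from your argument as written.

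The repair is essentially what the paper does: observe that, since bounded genus is hereditary, \emph{every} induced subgraph of $G$ either has at most $12g$ vertices (a constant, since $g\leq c$) or contains a vertex of degree at most $6$, and then rerun the branching algorithm from the proof of Theorem~\ref{teo-hered} rather than invoking its statement as a black box: while the current associated graph has more than $12g$ vertices you branch on a vertex of degree at most $6$, giving branching factor at most $7$, and once an associated graph has at most $12g$ vertices you finish it by brute force in constant time (constant depending only on $c$). This keeps the search tree of size $O(7^\alpha)$ and yields the claimed $O(7^\alpha\cdot(m+n))$ bound. So your Euler-formula calculation is the right starting point, but the corollary needs this modification of the algorithm, not a (false) uniform $6$-degeneracy claim.
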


\begin{proof}
Observe that if $G$ has genus equals $0$ then $G$ is a planar graph, thus $G$ is 5-degenerate and it is done by Theorem \ref{teo-hered}.
Now, assume that the genus $g$ of $G$ is bounded by a constant $c\geq 1$. Since $G$ has bounded genus $g$, we can assume that $n\geq 12g$, otherwise we can verify well coveredness in constant time.

Let $G$ be a graph embedded in a surface of genus $g$ without crossing edges (for example, toroidal graphs have $g=1$). From the Euler's formula for surfaces of genus $g$, we have that $m=n+f-2+2g$, where $f$ is the number of faces of $G$.  Moreover, we can assume that all faces of $G$ are triangles (otherwise we can increase the number of edges) and then every edge is part of two faces: $3f=2m$. Thus, $m=n+(2/3)m-2+2g$ and then $m=3n+6g-6$. As $n\geq 12g$, it follows that $m \leq (3.5)n-6$. Since the sum of vertex degrees is $2m\leq 7n-12$, then $G$ has a vertex of degree less than $7$. 
As bounded genus is a hereditary property, it holds that each subgraph of $G$ either has at most $12g$ vertices, or has a vertex of degree at most six. Therefore, by applying similar ideas in the proof of Theorem \ref{teo-hered}, we can determine whether $G$ is well covered in $O(7^\alpha\cdot(m+n))$ time.
\end{proof}


\section{Well coveredness of graphs with few $P_4$'s}

In this section, we obtain linear time algorithms for extended $P_4$-laden graphs and $(q,q-4)$-graphs.
A \emph{cograph} is a graph with no induced $P_4$ (see \cite{corneil81}).
A graph $G$ is \emph{$P_4$-sparse} if every set of five vertices in $G$ induces at most one $P_4$ (see \cite{olariu92}).
A graph $G$ is $(q,q-4)$ for some integer $q\geq 4$ if every subset with at most $q$ vertices induces at most $q-4$ $P_4$'s (see \cite{olariu01}).
Cographs and $P_4$-sparse graphs are exactly the $(4,0)$-graphs and the $(5,1)$-graphs.
\cite{olariu01} obtained polynomial time algorithms for several optimization problems in $(q,q-4)$-graphs.
A graph is extended $P_4$-laden if every induced subgraph with at most six vertices contains at most two induced $P_4$'s or is $\{2K_2, C_4\}$-free.
This graph class was introduced by \cite{giak96}.

A motivation to develop algorithms for extended $P_4$-laden graphs and $(q,q-4)$-graphs lies on the fact that they are on the top of a widely studied hierarchy of classes containing many graphs with few $P_4$'s, including cographs, $P_4$-sparse, $P_4$-lite, $P_4$-laden and $P_4$-tidy graphs. See Figure \ref{figura1}.
\cite{klein13} obtained linear time algorithms to determine well coveredness for $P_4$-tidy graphs.


\begin{figure}
\centering
\scalebox{1}{
\begin{tikzpicture}[auto]
\tikzstyle{set1}=[draw,rectangle,fill=black!00,minimum size=9pt,inner sep=5pt]
\tikzstyle{set2}=[draw,rectangle,fill=black!10,minimum size=9pt,inner sep=5pt]

\node[set1] (extladen) at (4,7.5) {Extended $P_4$-laden};
\node[set1] (qq4) at (9,7.5) {$(q,q-4)$-graph};

\node[set2] (tidy) at (2,6) {$P_4$-tidy};
\node[set1] (laden) at (6,6) {$P_4$-laden};
\node[set1] (q73) at (9,6) {$(7,3)$-graph};

\node[set2] (extendible) at (0,4.5) {$P_4$-extendible};
\node[set2] (extsparse) at (4,4.5) {Extended $P_4$-sparse};
\node[set2] (lite) at (8,4.5) {$P_4$-lite};

\node[set2] (extred) at (2,3) {Extended $P_4$-reducible};
\node[set2] (sparse) at (6,3) {$P_4$-sparse};

\node[set2] (red) at (4,1.5) {$P_4$-reducible};

\node[set2] (cograph) at (4,0) {Cograph};

\path[-]
(red) edge (cograph) edge (extred) edge (sparse)
(extsparse) edge (extred) edge (sparse) edge (tidy)
(tidy) edge (extendible) edge (extladen) edge (lite)
(laden) edge (lite) edge (extladen)
(q73) edge (lite) edge (qq4)
(lite) edge (sparse)
(extred) edge (extendible);


\node () at (10,6.8) {$q>7$};

\end{tikzpicture}
}
\caption{Hierarchy of graphs with few $P_4$'s. In gray, the classes investigated by \cite{klein13}.}
\label{figura1}
\end{figure}
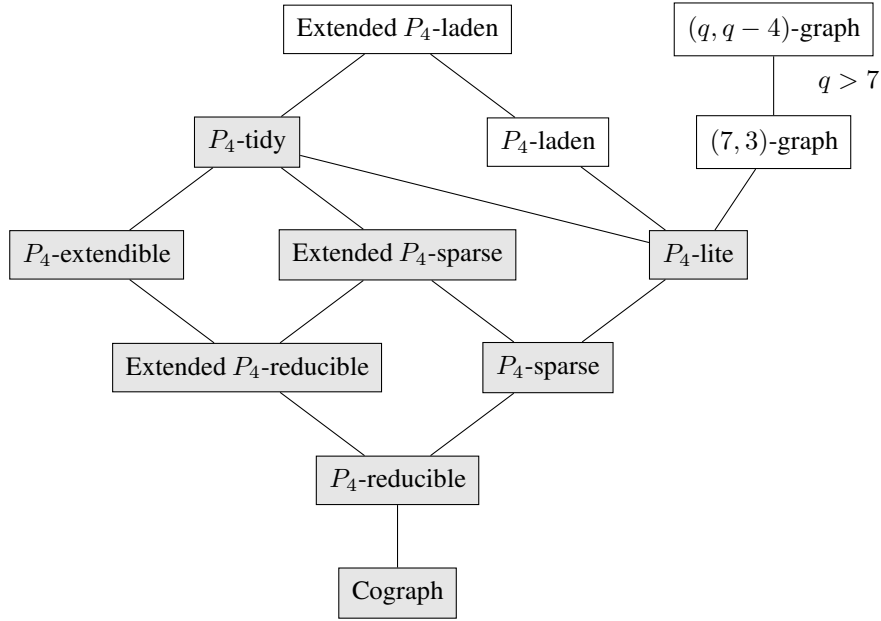


Given graphs $G_1=(V_1,E_1)$ and $G_2=(V_2,E_2)$, the \emph{union of $G_1$ and $G_2$} is the graph $G_1\cup G_2=(V_1\cup V_2,E_1\cup E_2)$ and the \emph{join of $G_1$ and $G_2$} is the graph $G_1\vee G_2=(V_1\cup V_2,E_1\cup E_2\cup\{uv: u\in V_1, v\in V_2\})$.

A \emph{pseudo-split} is a graph whose vertex set has a partition $(R,C,S)$ such that $C$ induces a clique, $S$ induces an independent set, every vertex of $R$ is adjacent to every vertex of $C$ and non-adjacent to every vertex of $S$, every vertex of $C$ has a neighbor in $S$ and every vertex of $S$ has a non-neighbor in $C$. Notice that the complement of a pseudo-split is also a pseudo-split.

A \emph{spider} is a pseudo-split with partition $(R,C,S)$ such that $C=\{c_1,\ldots,c_k\}$ and $S=\{s_1,\ldots,s_k\}$ for $k\geq 2$ and either $s_i$ is adjacent to $c_j$ if and only if $i=j$ (a thin spider), or $s_i$ is adjacent to $c_j$ if and only if $i\not=j$ (a thick spider).
Notice that the complement of a thin spider is a thick spider, and vice-versa.
A \emph{quasi-spider} is obtained from a spider by substituting a vertex of $C\cup S$ by a $K_2$ or a $\overline{K_2}$.

A graph $G$ is {\it $p$-connected} if, for every bipartition of the vertex set, there is a crossing $P_4$ (that is, an induced $P_4$ with vertices in both parts of the bipartition). A {\it p-component} of $G$ is a maximal $p$-connected subgraph.
A graph $H$ is {\it separable} if its vertex set can be partitioned in two graphs $(H_1,H_2)$ such that every induced $P_4$ $wxyz$ with vertices of $H_1$ and $H_2$ satisfies $x,y\in V(H_1)$ and $w,z\in V(H_2)$. We write $H\to(H_1,H_2)$.
It was proved that, if $G$ and $\overline{G}$ are connected and $G$ is not $p$-connected, then $G$ has a separable $p$-component $H\to(H_1,H_2)$ such that every vertex of $G-H$ is adjacent to every vertex of $H_1$ and non-adjacent to every vertex of $H_2$ (see \cite{jamison95}). 

\cite{giak96} proved that every $p$-connected extended $P_4$-laden graph is pseudo-split or quasi-spider $(R,C,S)$ with $R=\emptyset$, or is isomorphic to $C_5$, $P_5$ or $\overline{P_5}$. The following theorem suggests a natural decomposition for extended $P_4$-laden graphs, which can be obtained in linear time.

\begin{theorem}[\cite{giak96}]\label{teo-giak96}
A graph $G$ is extended $P_4$-laden if and only if exactly one of the following conditions is satisfied:
\begin{itemize}
\item $G$ is the disjoint union or the join of two extended $P_4$-laden graphs;
\item $G$ is pseudo-split or quasi-spider $(R,C,S)$ such that $G[R]$ is extended $P_4$-laden;
\item $G$ is isomorphic to $C_5$, $P_5$ or $\overline{P_5}$;
\item $G$ has at most one vertex.
\end{itemize}
\end{theorem}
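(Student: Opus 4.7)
The plan is to prove the two directions separately, handling the (easier) sufficiency first and then reducing necessity to the $p$-connected case, where the real structural work lies.

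For the sufficiency direction, I would verify that each of the four structural conditions yields an extended $P_4$-laden graph. The one-vertex case and the graphs $C_5$, $P_5$, $\overline{P_5}$ are checked by direct inspection, noting that each contains at most one induced $P_4$ on any six vertices. For disjoint union and join, the key observation is that complementation preserves the class (its defining property is self-complementary, since $2K_2$ and $C_4$ are complements and counting $P_4$'s is complementation-invariant), so it suffices to handle disjoint union. Given an induced subgraph $H\subseteq G_1\cup G_2$ on at most six vertices, partition it as $H=H_1\cup H_2$ with $H_i\subseteq G_i$; any induced $P_4$ in $H$ lies entirely in one $H_i$, and one argues by cases on $|V(H_1)|$ that either $H$ contains at most two induced $P_4$'s or $H$ is $\{2K_2,C_4\}$-free (the latter holding, for instance, whenever one of the $H_i$ is empty or a single vertex). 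For the pseudo-split and quasi-spider cases with $G[R]$ extended $P_4$-laden, I would use the rigid adjacency pattern between $R$, $C$ and $S$ to show that any induced $P_4$ with vertices in more than one of these parts is of a very restricted shape, then count such $P_4$'s on at most six vertices, falling back on the hypothesis on $G[R]$ for $P_4$'s entirely inside $R$.

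For the necessity direction, assume $G$ is extended $P_4$-laden. If $G$ is disconnected, then $G$ is a disjoint union of two smaller extended $P_4$-laden graphs (the property is hereditary since it is defined on induced subgraphs), and we are in the first case. If $\overline{G}$ is disconnected, take complements and repeat. Otherwise both $G$ and $\overline{G}$ are connected. If $G$ is not $p$-connected, invoke the Jamison--Olariu decomposition cited in the excerpt to obtain a separable $p$-component $H\to (H_1,H_2)$ with every vertex of $G-H$ complete to $H_1$ and anticomplete to $H_2$; I would then argue that the corresponding partition of $V(G)$ realizes $G$ as the join of $H_1\cup(G-H)$ and a graph built from $H_2$ in such a way that, after possibly taking complements, $G$ is a join or disjoint union of two (hereditarily extended $P_4$-laden) induced subgraphs.

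The remaining, and genuinely hard, case is when $G$ is $p$-connected: here one must show that $G$ is a pseudo-split graph, a quasi-spider with $R=\emptyset$, or one of $C_5,P_5,\overline{P_5}$. My approach would follow Giakoumakis' original argument: pick an induced $P_4$ in $G$ (which exists because $G$ is $p$-connected and not trivial) and analyse how every other vertex can attach to it, using the six-vertex constraint to forbid almost all attachment patterns. A careful case analysis, driven by whether $G$ contains an induced $C_5$, $P_5$, or $\overline{P_5}$ as a spanning structure and by whether $2K_2$ and $C_4$ appear, shows that the only surviving configurations are the pseudo-split structure, the (thin or thick) spider structure, their $K_2$/$\overline{K_2}$-substitution variants (quasi-spiders), and the three small exceptional graphs. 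The main obstacle is this last step: the $p$-connected case requires a delicate enumeration of how vertices outside a fixed induced $P_4$ can be inserted without creating more than two $P_4$'s on six vertices while keeping some $2K_2$ or $C_4$ present, and it is here that the full strength of the extended $P_4$-laden definition (as opposed to the stronger $P_4$-laden one) is used to admit the quasi-spider and pseudo-split families in addition to the classical spider decomposition.
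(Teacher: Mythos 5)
The paper does not prove this statement at all: it is quoted verbatim from \cite{giak96}, and the only ingredient the paper records is the sentence preceding it (every $p$-connected extended $P_4$-laden graph is pseudo-split, or a quasi-spider with $R=\emptyset$, or isomorphic to $C_5$, $P_5$, $\overline{P_5}$). So the relevant comparison is whether your sketch would actually constitute a proof of Giakoumakis' theorem, and it would not, for two concrete reasons.

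First, your necessity argument mishandles the case where $G$ and $\overline{G}$ are both connected but $G$ is not $p$-connected. In that case $G$ is neither a disjoint union nor a join (that would force $G$ or $\overline{G}$ to be disconnected), so you cannot conclude, ``after possibly taking complements,'' that you are back in the first bullet. Moreover the structure you assert is false: in the Jamison--Olariu separation $H\to(H_1,H_2)$, the vertices of $G-H$ are anticomplete to $H_2$ while $H_1$ does have edges to $H_2$, so $G$ is not the join of $H_1\cup(G-H)$ with a graph built from $H_2$. This is precisely the case that produces the second bullet with $R\neq\emptyset$: one uses heredity to see that the $p$-component $H$ is itself extended $P_4$-laden and $p$-connected, applies the $p$-connected classification to get that $H$ is pseudo-split or a quasi-spider with empty $R$ (and rules out $C_5$, $P_5$, $\overline{P_5}$ as separable $p$-components here), identifies the separation $(H_1,H_2)$ with the clique/stable pair $(C,S)$, and then sets $R:=V(G)\setminus V(H)$, whose complete/anticomplete adjacencies to $C$ and $S$ yield exactly the partition $(R,C,S)$ with $G[R]$ extended $P_4$-laden. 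Your case analysis never generates this outcome, so it does not cover the statement being proved. Second, the genuinely hard part --- the classification of $p$-connected extended $P_4$-laden graphs as pseudo-split, quasi-spider with $R=\emptyset$, or $C_5$, $P_5$, $\overline{P_5}$ --- is only described as ``a careful case analysis'' to be carried out along Giakoumakis' lines; that analysis is the entire content of the theorem, so as written the proposal is a plan rather than a proof. Given that the paper itself simply cites \cite{giak96}, citing is the appropriate course; if you insist on a self-contained proof, the $p$-connected enumeration (and the small point that the four alternatives are mutually exclusive, since the theorem says ``exactly one'') must actually be done.
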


For every $q\geq 4$, the $(q,q-4)$-graphs are self-complementary and have a nice structural decomposition in terms of union and join operations, spider graphs and separable p-components.
\cite{olariu98} proved that every $p$-connected $(q,q-4)$-graph is a spider $(R,C,S)$ with $R=\emptyset$ or has less than $q$ vertices.

\begin{theorem}[\cite{olariu98}]\label{teo-primeval}
Let $q\geq 4$.
If $G$ is a $(q,q-4)$-graph, then one of the following holds:
\begin{itemize}
\item $G=G_1\cup G_2$ is the union of two $(q,q-4)$-graphs $G_1$ and $G_2$;
\item $G=G_1\vee G_2$ is the join of two $(q,q-4)$-graphs $G_1$ and $G_2$;
\item $G$ is a spider $(R,C,S)$ such that $G[R]$ is a $(q,q-4)$-graph.
\item $G$ contains a separable $p$-component $H\to(H_1,H_2)$ with $|V(H)|<q$ such that $G-H$ is a $(q,q-4)$-graph and every vertex of $G-H$ is adjacent to every vertex of $H_1$ and not adjacent to any vertex of $H_2$;
\item $G$ has less than $q$ vertices.
\end{itemize}
\end{theorem}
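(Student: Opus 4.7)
The plan is to derive this structural theorem from the general primeval decomposition for arbitrary graphs, combined with the previously stated characterization of $p$-connected $(q,q-4)$-graphs (every such graph is a spider with $R=\emptyset$ or has fewer than $q$ vertices), and then to verify that each piece of the decomposition inherits the $(q,q-4)$ property. Recall the underlying classical dichotomy of Jamison--Olariu: every graph $G$ on at least two vertices satisfies exactly one of (i) $G$ is disconnected; (ii) $\overline{G}$ is disconnected; (iii) both $G$ and $\overline{G}$ are connected and $G$ is not $p$-connected, in which case $G$ admits a separable $p$-component $H\to(H_1,H_2)$ with every vertex of $G-H$ adjacent to all of $H_1$ and non-adjacent to all of $H_2$; or (iv) $G$ is $p$-connected.

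First I would dispose of the trivial base case: if $|V(G)|<q$, the last bullet holds. Otherwise, I apply the dichotomy above. In case (i), the decomposition $G=G_1\cup G_2$ into two nonempty induced subgraphs immediately yields the first bullet, since being a $(q,q-4)$-graph is hereditary (the bound on induced $P_4$'s over every $q$-vertex subset is obviously preserved under taking induced subgraphs). Case (ii) follows symmetrically by applying (i) to $\overline{G}$ together with the self-complementarity of the class. In case (iv), I invoke the characterization of $p$-connected $(q,q-4)$-graphs: either $G$ is a spider $(R,C,S)$ with $R=\emptyset$, matching the spider bullet (with $G[R]$ trivially a $(q,q-4)$-graph, being empty), or $G$ has fewer than $q$ vertices, matching the last bullet.

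Case (iii) is the delicate one. I would take $H$ to be the separable $p$-component provided by the dichotomy; heredity gives that $G-H$ is again a $(q,q-4)$-graph, and the separability property $H\to(H_1,H_2)$ gives exactly the required adjacency structure between $G-H$ and $H_1,H_2$. The key missing ingredient is the bound $|V(H)|<q$, which I would establish by contradiction: if $|V(H)|\geq q$, then the $p$-connectivity of $H$ forces too many induced $P_4$'s on some $q$-vertex subset of $V(H)$, violating the global $(q,q-4)$ hypothesis on $G$.

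The main obstacle, therefore, is quantifying this forced abundance of $P_4$'s inside a $p$-connected graph. I would approach it by a chain-style argument: $p$-connectivity guarantees at least one crossing induced $P_4$ for every bipartition of $V(H)$, so starting from any induced $P_4$ in $H$ and extending one vertex at a time, each new vertex must participate in a fresh induced $P_4$ (otherwise removing it would refine a bipartition with no crossing $P_4$, contradicting $p$-connectivity). Iterating $q-4$ times produces at least $q-3$ induced $P_4$'s on $q$ chosen vertices of $H$, contradicting the $(q,q-4)$ bound on $G$. This incremental counting argument is the combinatorial heart of Olariu's proof and is the only step that is not essentially bookkeeping.
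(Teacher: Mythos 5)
First, a point of reference: the paper does not prove this statement at all --- it is quoted from \cite{olariu98} --- so there is no internal proof to compare against, and your attempt has to be judged on its own. Your base case, the disconnected and co-disconnected cases, and the $p$-connected case (iv) are fine: heredity of the $(q,q-4)$ property plus the quoted characterization of $p$-connected $(q,q-4)$-graphs do the work there. The problem is your case (iii).

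There you try to show that the separable $p$-component $H$ always has fewer than $q$ vertices, by arguing that a $p$-connected graph on at least $q$ vertices must force more than $q-4$ induced $P_4$'s into some $q$-vertex subset. That claim is false, and its failure is exactly why the theorem needs a spider bullet with a possibly nonempty head $R$. Spiders with $R=\emptyset$ are $p$-connected, can be arbitrarily large, and can still be $(q,q-4)$-graphs: for $q=5$, a thin spider with $k$ legs has $2k$ vertices, yet any $5$ of its vertices contain at most two complete pairs $\{s_i,c_i\}$ and hence at most one induced $P_4$ (the net, $k=3$, is already a $6$-vertex $p$-connected $(5,1)$-graph); the same count shows thin spiders of every size are $(7,3)$-graphs. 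So if $G$ is a $(5,1)$-graph whose separable $p$-component is a large thin spider, no contradiction arises and no bound $|V(H)|<q$ holds --- $G$ belongs to the third bullet, which your proof never produces. (Your incremental argument is also locally unsound: $p$-connectivity applied to the bipartition $(\{v\},V(H)\setminus\{v\})$ only yields some induced $P_4$ of $H$ through $v$, not one contained in your growing $q$-vertex set, so the ``fresh $P_4$'' count does not accumulate.) The repair is to apply the quoted characterization of $p$-connected $(q,q-4)$-graphs to $H$ itself, which is $(q,q-4)$ by heredity: either $|V(H)|<q$, giving the fourth bullet, or $H$ is a spider with empty head, in which case $H_1=C$, $H_2=S$, and the homogeneous attachment of $G-H$ (complete to $H_1$, anticomplete to $H_2$) makes $G$ itself a spider $(R,C,S)$ with $R=V(G-H)$ and $G[R]=G-H$ a $(q,q-4)$-graph by heredity --- the third bullet with nonempty $R$.
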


This decomposition can be obtained in linear time (see \cite{olariu98}).
Linear time algorithms are obtained by \cite{olariu01} for several optimization problems in $(q,q-4)$-graphs using this decomposition.

\cite{klein13} characterized well coveredness for the union and join operations.

\begin{theorem}[Theorems 6 and 7 of \cite{klein13}]\label{lem-union-join}
Let $G_1$ and $G_2$ be two graphs.
Then $G_1\cup G_2$ is well covered if and only if $G_1$ and $G_2$ are well covered.
Moreover, $G_1\vee G_2$ is well covered if and only if $G_1$ and $G_2$ are well covered and $\alpha(G_1)=\alpha(G_2)$.
\end{theorem}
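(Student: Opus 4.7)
The plan is to characterize the maximal independent sets of $G_1\cup G_2$ and of $G_1\vee G_2$ directly in terms of the maximal independent sets of $G_1$ and $G_2$, and then read off the well-coveredness equivalences from these characterizations.

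For the disjoint union case, I would argue that $I\subseteq V(G_1)\cup V(G_2)$ is a maximal independent set of $G_1\cup G_2$ if and only if $I=I_1\cup I_2$ where $I_j=I\cap V(G_j)$ is a maximal independent set of $G_j$. The ``only if'' direction uses the fact that there are no edges between $V(G_1)$ and $V(G_2)$, so independence decomposes component-wise, and any extension in one component would give an extension in the union. The ``if'' direction is immediate. Since $|I|=|I_1|+|I_2|$, all maximal independent sets of $G_1\cup G_2$ have the same cardinality if and only if $|I_1|$ is constant over maximal independent sets of $G_1$ and $|I_2|$ is constant over maximal independent sets of $G_2$, i.e., both $G_1$ and $G_2$ are well covered.

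For the join case, the key observation is that every vertex of $V(G_1)$ is adjacent in $G_1\vee G_2$ to every vertex of $V(G_2)$. Consequently, any independent set of $G_1\vee G_2$ that meets both sides would contain two adjacent vertices, so every nonempty maximal independent set $I$ of $G_1\vee G_2$ is contained entirely in $V(G_1)$ or entirely in $V(G_2)$. I would then check that, say for $I\subseteq V(G_1)$, such an $I$ is maximal in $G_1\vee G_2$ if and only if it is maximal in $G_1$: maximality in $G_1$ already blocks extensions within $V(G_1)$, while any $v\in V(G_2)$ is adjacent to all of $I$ (as $I$ is nonempty) and hence cannot be added. Thus the family of maximal independent sets of $G_1\vee G_2$ is exactly the union of those of $G_1$ and those of $G_2$, and they all share a common size if and only if each $G_i$ is well covered and $\alpha(G_1)=\alpha(G_2)$.

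I do not expect a substantive obstacle; the only delicate point is the degenerate case in which one of the $G_i$ has no vertices, which must be handled separately because then $G_1\vee G_2=G_{3-i}$ and the ``nonempty'' assumption used to block extensions from the other side is vacuous. Treating this trivial case by inspection (an empty graph is vacuously well covered with $\alpha=0$) completes the argument, and the rest of the proof is a direct verification of the two equivalences from the characterizations above.
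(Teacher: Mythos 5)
Your argument is correct, but note that the paper itself offers no proof of this statement: it is imported verbatim as Theorems 6 and 7 of Klein, de Mello and Morgana (2013), so there is no in-paper proof to compare against. What you give is the standard self-contained argument, namely a complete description of the maximal independent sets of $G_1\cup G_2$ (they are exactly the sets $I_1\cup I_2$ with $I_j$ maximal independent in $G_j$) and of $G_1\vee G_2$ (they are exactly the maximal independent sets of $G_1$ together with those of $G_2$, since the join edges forbid an independent set meeting both sides), from which both equivalences follow by comparing cardinalities; for the union direction you implicitly fix one maximal independent set in $G_2$ while varying the one in $G_1$, which is fine. The only quibble is your treatment of the degenerate case: if $G_2$ has no vertices and $G_1=K_2$, then $G_1\vee G_2=G_1$ is well covered while $\alpha(G_1)=1\neq 0=\alpha(G_2)$, so the join equivalence as literally stated fails for an empty factor; the correct resolution is not that inspection ``completes the argument'' but that the statement (and its use in the primeval decomposition, where both factors of a union or join node are nonempty) tacitly assumes $V(G_1),V(G_2)\neq\emptyset$. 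With that convention made explicit, your proof is complete and matches the intended content of the cited result.
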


\vspace{0.5cm}

In the following, we characterize well coveredness for pseudo-split graphs.
It is worth mentioning that \cite{sula-cocoa16} obtained a characterization for well-covered split graphs which is similar to the following, but different, since pseudo-split graphs have the part $R$ which must be considered.

\begin{lemma}\label{lem-pseudosplit}
Let $G$ be a pseudo-split graph with partition $(R,C,S)$. Then $G$ is well covered if and only if $R=\emptyset$ and every vertex of $C$ has exactly one neighbor in $S$.
\end{lemma}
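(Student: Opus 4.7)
The plan is to classify every maximal independent set of $G$ according to its intersection with the clique $C$, and then compare sizes. Observe that since $C$ is a clique, any independent set $I$ satisfies $|I\cap C|\leq 1$, and if $c\in I\cap C$ then $I\cap R=\emptyset$ because $R$ is completely joined to $C$. Hence maximal independent sets come in two flavors:

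\emph{Type A} ($I\cap C=\emptyset$): Because $S$ is independent and there are no edges between $R$ and $S$, every $s\in S$ can always be added to such an $I$, so maximality forces $I\cap S=S$; the remainder $I\cap R$ must then be a maximal independent set of $G[R]$. Therefore every Type A set has the form $S\cup M_R$ with $|I|=|S|+|M_R|$. \emph{Type B} ($I\cap C=\{c\}$): maximality forces $I\cap S=S\setminus N(c)$, giving $|I|=1+|S|-|N(c)\cap S|$, and one checks this set is indeed independent and maximal.

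For the backward direction, if $R=\emptyset$ and $|N(c)\cap S|=1$ for every $c\in C$, then the unique Type A set has size $|S|$ and every Type B set has size $1+|S|-1=|S|$, so all maximal independent sets have the same size and $G$ is well covered. For the forward direction, assume $G$ is well covered. Type B sets having equal size forces a common value $t=|N(c)\cap S|$ for all $c\in C$, and the definition of a pseudo-split yields $t\geq 1$. If $R=\emptyset$, the Type A size $|S|$ must equal $1+|S|-t$, giving $t=1$. If $R\neq\emptyset$, then $G[R]$ must itself be well covered (otherwise two Type A sets would differ), and with $a=\alpha(G[R])\geq 1$ equating $|S|+a=1+|S|-t$ gives $t=1-a\leq 0$, contradicting $t\geq 1$.

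The only real obstacle is making the case analysis of maximal independent sets airtight; in particular, one must carefully use all four defining adjacency conditions of the pseudo-split partition (the clique on $C$, the independence of $S$, the complete join between $R$ and $C$, and the absence of edges between $R$ and $S$) to justify the forms claimed for Type A and Type B, and invoke the pseudo-split definition's requirement that every $c\in C$ has at least one neighbor in $S$ to force the contradiction in the case $R\neq\emptyset$.
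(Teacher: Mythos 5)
Your proof is correct and follows essentially the same route as the paper: both arguments classify the maximal independent sets by their intersection with the clique $C$ (either $S$ together with a maximal independent set of $G[R]$, or $\{c\}\cup(S\setminus N(c))$ for some $c\in C$) and compare their sizes, using that every vertex of $C$ has a neighbor in $S$. The only cosmetic difference is that the paper dispatches the case $R\neq\emptyset$ by directly exhibiting the independent set $S\cup\{r\}$ against the maximal set $\{c\}\cup(S\setminus N(c))$, whereas you derive the same contradiction numerically via $\alpha(G[R])\geq 1$ and $|N(c)\cap S|\geq 1$.
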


\begin{proof}
Recall that $C$ induces a clique, $S$ induces an independent set and every vertex of $C$ has a neighbor in $S$.
Suppose that $R\ne\emptyset$. Then, for every vertex $r\in R$, $S\cup\{r\}$ is an independent set of $G$ with $|S|+1$ vertices. Moreover, for any vertex $c\in C$, we have that $S\cup\{c\}\setminus N(c)$ is a maximal independent set of $G$ with at most $|S|$ vertices. Thus $G$ is not well covered.

Now assume that $R=\emptyset$. 
Let $I$ be a maximal independent set of $G$. Clearly $|I\cap C|\leq 1$ since $C$ induces a clique. Suppose that $I\cap C=\emptyset$. Then $I=S$ (because $I$ is maximal) and consequently $|I|=|S|$. Now suppose that $|I\cap C|=1$ and let $c\in I\cap C$. Then $S\setminus N(c)\subseteq I$ (because $I$ is maximal), and consequently $|I|$ is the number of non-neighbors of $c$ in $S$ plus one. Then $G$ is well covered if and only if $c$ has exactly one neighbor in $S$ for every $c\in C$.
Since they are the only possible maximal independent sets, we are done.
\end{proof}

In the following, we characterize well coveredness for quasi-spiders.

\begin{lemma}\label{lem-spider}
Let $G$ be a quasi spider with partition $(R,C,S)$. Then $G$ is well covered if and only if $R=\emptyset$ and $G$ is a thin spider with a vertex possibly substituted by a $K_2$.
\end{lemma}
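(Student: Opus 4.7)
The plan is to parallel the proof of Lemma \ref{lem-pseudosplit}: first force $R=\emptyset$, and then classify which underlying spider (thin or thick) and which substitution ($K_2$, $\overline{K_2}$, or none) can produce a well-covered graph, by exhibiting two maximal independent sets of different sizes in every remaining case.

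For the ``only if'' direction, I would first show $R=\emptyset$. The pseudo-split relations between $R$ and $C\cup S$ are preserved by the substitution, which modifies only adjacencies inside $C\cup S$; hence every $r\in R$ is adjacent to every vertex of $C$ and to no vertex of $S$. Let $S^*$ be a maximum independent subset of $S$ (which differs from $S$ itself only if the substitution is a $K_2$ on an $S$-vertex). Then $S^*\cup\{r\}$ extends to a maximal independent set of size at least $|S^*|+1$, while $\{c\}\cup(S^*\setminus N(c))$ is a maximal independent set containing no vertex of $R$, of size at most $|S^*|$, since each $c\in C$ has at least one neighbor in $S$ by the spider adjacency rules. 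This yields two maximal independent sets of distinct sizes, so $R=\emptyset$.

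Next, assuming $R=\emptyset$, I would rule out thick spiders with $k\geq 3$: each $c_i$ satisfies $|S\cap N(c_i)|=k-1\geq 2$, so $\{c_i\}\cup(S\setminus N(c_i))=\{c_i,s_i\}$ is a maximal independent set of size $2$, whereas $S$ is maximal of size $k\geq 3$; a single $K_2$- or $\overline{K_2}$-substitution shifts each count by at most one, so the two sizes stay distinct. I would then rule out $\overline{K_2}$-substitutions of a thin spider: the two non-adjacent twins $v^1,v^2$ can be placed together inside a set otherwise resembling $S$, producing an independent set of size $k+1$, while a set of the form $\{c_j\}\cup(S\setminus\{s_j\})$ for $c_j$ not the substituted vertex is still maximal of size $k$.

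For the ``if'' direction, I would verify that a thin spider with $R=\emptyset$, with or without a $K_2$-substitution on a single vertex of $C\cup S$, is well covered by enumerating its maximal independent sets: each one is either a maximum independent subset of the (possibly substituted) $S$-side, or consists of a single $c_i\in C$ together with all of $S\setminus\{s_i\}$, choosing in either case at most one vertex of the substituted $K_2$-pair when applicable. A direct count shows every such set has cardinality exactly $k$, so $G$ is well covered. The main obstacle I foresee is careful bookkeeping of the substitution: ensuring the adjacencies of the new twin pair with the other $C\cup S$ vertices are unambiguous in each of the four subcases (vertex in $C$ vs.\ $S$, substituted by $K_2$ vs.\ $\overline{K_2}$) and verifying that no unexpected small maximal independent set arises from interactions between the substituted pair and the remainder of the structure.
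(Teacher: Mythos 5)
Your overall route is sound and essentially parallels the paper's: the paper also forces $R=\emptyset$ and classifies the spider types, but it gets the unsubstituted cases for free from Lemma~\ref{lem-pseudosplit} (a spider is a pseudo-split, so a thin spider is well covered iff $R=\emptyset$, while a thick one additionally needs $k=2$, in which case it is itself thin), and then disposes of the substitutions with short explicit witnesses. Your from-scratch case analysis is the same idea with the pseudo-split lemma re-proved inline; that is fine, and your ``if'' direction and the $R\neq\emptyset$ argument are correct. However, two of your concrete steps fail as stated, precisely in the bookkeeping you flagged.

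First, for thick spiders with $k\geq 3$ you argue that the unsubstituted sizes are $2$ and $k$, and that ``a single substitution shifts each count by at most one, so the two sizes stay distinct''; for $k=3$ this inference is invalid ($2+1=3$). The conclusion holds, but you must pick the witnesses away from the substitution: choose an index $j$ such that neither $c_j$ nor $s_j$ is the substituted vertex (possible since $k\geq 3$ and only one vertex is replaced); then $\{c_j,s_j\}$ is still a maximal independent set of size exactly $2$, while any maximal independent subset of the $S$-side has size at least $k\geq 3$. Second, when ruling out a $\overline{K_2}$-substitution of a thin spider, your small witness $\{c_j\}\cup(S\setminus\{s_j\})$ ``for $c_j$ not the substituted vertex'' only has size $k$ when the substituted vertex lies in $C$. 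If an $S$-vertex $s_i$ is replaced by nonadjacent twins, then for every $j\neq i$ the maximal independent set containing $c_j$ picks up both twins and has size $k+1$, equal to the size of the $S$-side, so that pair yields no contradiction; the correct small witness is $\{c_i\}$ (the unique $C$-neighbour of the twins) together with the $S$-side minus both twins, which is maximal of size $k$ against the $S$-side of size $k+1$ --- this is exactly the $k$ versus $k+1$ pair the paper exhibits. With these two witness choices repaired (and the explicit remark, which you use implicitly, that a thick spider with $k=2$ is a thin spider), your proof is complete.
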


\begin{proof}
Recall that $C$ induces a clique, $S$ induces an independent set and every vertex of $R$ is adjacent to each vertex of $C$ and non-adjacent to each vertex of $S$. Let $C=\{c_1,\ldots,c_k\}$ and $S=\{s_1,\ldots,s_k\}$ for $k\geq 2$. 

From Lemma \ref{lem-pseudosplit}, we have that a thin spider is well covered if and only if $R=\emptyset$. The same is valid if a vertex of $C\cup S$ is substituted by a $K_2$.
However, if a vertex of $C\cup S$ is substituted by a $\overline{K_2}$, then we obtain two independent sets with sizes $k$ and $k+1$, and consequently $G$ is not well covered.

From Lemma \ref{lem-pseudosplit}, we have that a thick spider is well covered if and only if $R=\emptyset$ and $k=2$, and consequently $G$ is also a thin spider  with $R=\emptyset$.
Moreover, the same is valid if any vertex of $C\cup S$ is substituted by a $K_2$ or a $\overline{K_2}$.
\end{proof}

\vspace{0.5cm}

In the following, we determine well coveredness for separable p-components $H$ with less than $q$ vertices.

\begin{lemma}\label{lem-pcomp}
Let $q\geq 4$ be a fixed integer and $G$ be a graph with a separable $p$-component $H$ with separation $H\to(H_1,H_2)$ with less than $q$ vertices such that $G-H\ne\emptyset$, every vertex of $G-H$ is adjacent to all vertices of $H_1$ and non-adjacent to all vertices of $H_2$. Then $G$ is well covered if and only if $G-H$ and $H_2$ are well covered and every maximal independent set of $H$ with a vertex of $H_1$ has exactly $\alpha(G-H)+\alpha(H_2)$ vertices.
\end{lemma}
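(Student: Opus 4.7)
The plan is to classify the maximal independent sets of $G$ according to how they intersect the part $H_1$ and compute their sizes in each case. If $I$ is a maximal independent set of $G$ with $I \cap H_1 = \emptyset$ (call this Type A), I would argue that $I \cap (G-H)$ is maximal in $G-H$ and $I \cap H_2$ is maximal in $H_2$. For any $u \in (G-H) \setminus I$, a neighbor of $u$ in $I$ cannot lie in $H_2$ (no edges between $G-H$ and $H_2$) nor in $H_1$ (empty intersection), so it lies in $(G-H) \cap I$, and symmetric reasoning handles $u \in H_2 \setminus I$. Thus $|I| = |I \cap (G-H)| + |I \cap H_2|$. If instead $I \cap H_1 \neq \emptyset$ (Type B), then $I \cap (G-H) = \emptyset$ because every vertex of $G-H$ is adjacent to every vertex of $H_1$, and $I = I \cap H$ is a maximal independent set of $H$ containing a vertex of $H_1$, giving $|I| = |I \cap H|$.

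For the sufficient direction, assume the three conditions of the lemma hold. Every Type A maximal independent set of $G$ has size $\alpha(G-H) + \alpha(H_2)$ by the well coveredness of $G-H$ and $H_2$, and every Type B maximal independent set has the same size by the third hypothesis. Hence all maximal independent sets of $G$ have the same size, so $G$ is well covered.

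For the necessary direction, assume $G$ is well covered. I would first show that for any maximal independent sets $I_1$ of $G-H$ and $I_2$ of $H_2$, the set $I_1 \cup I_2$ is a maximal independent set of $G$: independence is immediate from the absence of edges between $G-H$ and $H_2$, and maximality is verified by inspecting missing vertices in $G-H$, $H_1$, and $H_2$ separately (vertices of $H_1$ are dominated by $I_1$, vertices of $H_2$ by $I_2$, vertices of $G - H$ by $I_1$). Varying $I_1$ while fixing $I_2$ and invoking well coveredness of $G$ shows $G-H$ is well covered, and symmetrically $H_2$ is well covered. Finally, given any maximal independent set $J$ of $H$ containing a vertex of $H_1$, I would show $J$ is maximal in $G$: any missing vertex in $H$ has a neighbor in $J$ by maximality of $J$ in $H$, and any missing vertex of $G-H$ is adjacent to every vertex of $H_1$ and hence to $J$. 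So $|J| = \alpha(G) = \alpha(G-H) + \alpha(H_2)$, the latter equality coming from the existence of a Type A set (which uses $G - H \neq \emptyset$). The main technical point is the careful case analysis of maximality after restriction in the Type A setting; edge cases such as $H_1 = \emptyset$ or $H_2 = \emptyset$ make one side of the classification vacuous and reduce the claim to the disjoint-union situation of Theorem~\ref{lem-union-join}.
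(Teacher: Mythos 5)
Your proposal is correct and follows essentially the same route as the paper: maximal independent sets of $G$ either decompose into a maximal independent set of $G-H$ together with one of $H_2$, or are maximal independent sets of $H$ meeting $H_1$, and both directions follow from this classification. Your write-up just spells out more explicitly the case analysis that the paper's terser proof leaves implicit.
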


\begin{proof}
At first, notice that, from any maximal independent sets $I$ of $G-H$ and $I_2$ of $H_2$, we obtain a maximal independent set $I\cup I_2$ of $G$.
Thus, if either $G-H$ or $H_2$ is not well covered, then $G$ is not well covered.
Moreover, any maximal independent set of $H$ with a vertex of $H_1$ is also a maximal independent set of $G$, since every vertex of $H_1$ is adjacent to all vertices in $G-H$. Thus, if there is a maximal independent set $I_1$ of $H$ containing a vertex of $H_1$ with $|I_1|\ne\alpha(G-H)+\alpha(H_2)$, then $G$ is not well covered.
Finally, notice that any maximal independent set of $G$ with a vertex in $G-H$ is obtained from a maximal independent set of $G-H$ and a maximal independent set of $H_2$, and we are done.
\end{proof}

\begin{theorem}
Let $G$ be a graph. If $G$ is a $(q,q-4)$-graph, we can determine well coveredness in linear time $O(2^qq^2\cdot(m+n))$.
If $G$ is extended $P_4$-laden, we can determine well coveredness in linear time.
\end{theorem}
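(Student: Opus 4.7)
The plan is to process the primeval decomposition tree bottom-up, maintaining at each node $v$ (with associated induced subgraph $G_v$) the pair consisting of a Boolean \emph{is $G_v$ well covered} and the integer $\alpha(G_v)$. Since the decompositions of Theorem \ref{teo-giak96} (for extended $P_4$-laden) and Theorem \ref{teo-primeval} (for $(q,q-4)$-graphs) can both be built in linear time and have $O(n)$ nodes, the algorithm will meet the claimed running time as soon as we argue that each internal node can be processed in $O(2^q q^2)$ work plus work proportional to its local size.

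At leaves (a single vertex or one of $C_5, P_5, \overline{P_5}$ in the extended $P_4$-laden case; fewer than $q$ vertices in the $(q,q-4)$-case), we enumerate all $\leq 2^q$ subsets and identify maximal independent sets by brute force, obtaining the required pair in $O(2^q q^2)$ time; for the constant-size pieces $C_5, P_5, \overline{P_5}$ this is just a table lookup. For union and join internal nodes we apply Theorem \ref{lem-union-join}: the Boolean follows from those of the children, while $\alpha(G_1 \cup G_2) = \alpha(G_1) + \alpha(G_2)$ and $\alpha(G_1 \vee G_2)$ equals the common value $\alpha(G_i)$ when well covered. At a pseudo-split or quasi-spider node with partition $(R,C,S)$, Lemmas \ref{lem-pseudosplit} and \ref{lem-spider} force $R = \emptyset$ for well-coveredness; if $R \neq \emptyset$ we immediately declare $G_v$ not well covered. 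When $R = \emptyset$ we check the precise structural condition from these lemmas (every vertex of $C$ has exactly one neighbor in $S$; or a thin spider possibly with a $K_2$-substitution) and set $\alpha(G_v) = |S|$, in time linear in the local size.

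The only remaining case occurs in the $(q,q-4)$-decomposition: a separable $p$-component $H \to (H_1, H_2)$ with $|V(H)| < q$ such that every vertex of $G - H$ is adjacent to all of $H_1$ and to none of $H_2$. For such a node we apply Lemma \ref{lem-pcomp}: we retrieve $\alpha(G - H)$ and its well-coveredness from the recursive computation; we compute $\alpha(H_2)$ and verify well-coveredness of $H_2$ by enumerating all $\leq 2^q$ subsets of $V(H_2)$; and we enumerate the $\leq 2^q$ subsets of $V(H)$ to test whether every maximal independent set of $H$ meeting $H_1$ has exactly $\alpha(G - H) + \alpha(H_2)$ vertices. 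This costs $O(2^q q^2)$ per such node.

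The main obstacle is the complexity accounting. The key observation is that along the decomposition each vertex of $G$ has a unique \emph{home} node, namely the unique node at which it lies in a leaf, in $C \cup S$ of a spider/pseudo-split, or in $V(H)$ of a separable $p$-component (in every other case it descends into a recursively decomposed part). Hence the sizes of these ``home'' portions sum to $O(n)$ across all nodes; multiplying by the per-node cost $O(2^q q^2)$ yields $O(2^q q^2 \cdot n)$, and union/join nodes add only constant overhead each. Combined with the $O(m+n)$ cost of building the decomposition itself, the total is $O(2^q q^2 \cdot (m+n))$ for $(q,q-4)$-graphs. For extended $P_4$-laden graphs there is no separable $p$-component case and the base pieces have constant size, so the per-node cost is $O(1)$ and the whole algorithm runs in linear time $O(m+n)$.
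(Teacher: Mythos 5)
Your proposal is correct and follows essentially the same route as the paper: a bottom-up dynamic programming over the primeval decomposition, handling union/join nodes via Theorem \ref{lem-union-join}, pseudo-split/quasi-spider nodes via Lemmas \ref{lem-pseudosplit} and \ref{lem-spider} (forcing $R=\emptyset$), separable $p$-components via Lemma \ref{lem-pcomp} with $O(2^q q^2)$ brute force, and small/constant-size leaves directly. Your explicit bookkeeping of $\alpha(G_v)$ and the per-vertex ``home node'' accounting are just a more detailed version of the paper's complexity argument.
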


\begin{proof}
Taking a primeval decomposition $\mathcal{T}$ of $G$, it is easy to see that one can construct a bottom-up dynamic programming according to the following rules:

\begin{enumerate}
\item if $G$ is isomorphic to $C_5$ or $\overline{P_5}$ then $G$ is well covered; 
\item if $G$ is isomorphic to $P_5$ then it is not well covered;
\item If $G=G_1\cup G_2$ is the union of two $(q,q-4)$-graphs or two extended $P_4$-laden graphs, then $G$ is well covered if and only if $G_1$ and $G_2$ are well covered; (see Theorem \ref{lem-union-join})
\item if $G=G_1\vee G_2$ is the join of two $(q,q-4)$-graphs or two extended $P_4$-laden graphs, then $G$ is well covered if and only if $G_1$ and $G_2$ are well covered and $\alpha(G_1)=\alpha(G_2)$; 
\item If $G$ is pseudo-split or quasi-spider, we are done by Lemmas \ref{lem-pseudosplit} and \ref{lem-spider};
\item if $G$ is a $(q,q-4)$-graph and has a separable p-connected component $H$, then we have from Lemma \ref{lem-pcomp} that well coveredness can be decided by verifying well coveredness for $H_2$ and checking all maximal independent sets of $H$ containing a vertex of $H_1$.
\end{enumerate}

Note that we can apply a linear-time preprocessing in order to check the well coveredness of every leaf node of $\mathcal{T}$. After that, in a bottom-up dynamic programming according to a post-order of $\mathcal{T}$, the well coveredness of nodes representing cases 1,2,3 and 4 can be checked in constant time. For a node representing case 5, to check if $R=\emptyset$ can be done in constant time, and if so, this node is a leaf. Finally, for nodes representing case 6, the well coveredness of $H_2$ can be tested in time $O(2^qq^2)$, since there are at most $2^q$ sets in $H_2$ and each one can be tested in time $O(q^2)$. We have the same time for maximal independent sets of $H$ containing a vertex of $H_1$.
\end{proof}


\acknowledgements
This research was supported by Faperj [CNE 09/2016] and [JCNE 03/2017], Funcap [4543945/2016] Pronem, CNPq Universal [425297/2016-0] and [437841/2018-9], and CAPES [88887.143992/2017-00] DAAD Probral.

\nocite{*}
\bibliographystyle{abbrvnat}

\begin{thebibliography}{28}
\providecommand{\natexlab}[1]{#1}
\providecommand{\url}[1]{\texttt{#1}}
\expandafter\ifx\csname urlstyle\endcsname\relax
  \providecommand{\doi}[1]{doi: #1}\else
  \providecommand{\doi}{doi: \begingroup \urlstyle{rm}\Url}\fi

\bibitem[Alves et~al.(2018)Alves, Dabrowski, Faria, Klein, Sau, and
  Souza]{sula-cocoa16}
S.~R. Alves, K.~K. Dabrowski, L.~Faria, S.~Klein, I.~Sau, and U.~S. Souza.
\newblock On the (parameterized) complexity of recognizing well-covered
  $(r,\ell)$-graph.
\newblock \emph{Theoretical Computer Science}, 746:\penalty0 36 -- 48, 2018.

\bibitem[Babel and Olariu(1998)]{olariu98}
L.~Babel and S.~Olariu.
\newblock On the structure of graphs with few {$P_4$}'s.
\newblock \emph{Discrete Applied Mathematics}, 84\penalty0 (1):\penalty0 1 --
  13, 1998.

\bibitem[Babel et~al.(2001)Babel, Kloks, Kratochv\'il, Kratsch, M\"uller, and
  Olariu]{olariu01}
L.~Babel, T.~Kloks, J.~Kratochv\'il, D.~Kratsch, H.~M\"uller, and S.~Olariu.
\newblock Efficient algorithms for graphs with few {$P_4$}'s.
\newblock \emph{Discrete Mathematics}, 235\penalty0 (1):\penalty0 29 -- 51,
  2001.

\bibitem[Bodlaender(1998)]{bodlaender98}
H.~L. Bodlaender.
\newblock A partial $k$-arboretum of graphs with bounded treewidth.
\newblock \emph{Theoretical Computer Science}, 209\penalty0 (1):\penalty0 1 --
  45, 1998.

\bibitem[Boria et~al.(2015)Boria, Croce, and Paschos]{boria15}
N.~Boria, F.~D. Croce, and V.~T. Paschos.
\newblock On the max min vertex cover problem.
\newblock \emph{Discrete Applied Mathematics}, 196:\penalty0 62 -- 71, 2015.

\bibitem[Caro(1997)]{caro97}
Y.~Caro.
\newblock Subdivisions, parity and well-covered graphs.
\newblock \emph{J. Graph Theory}, 25\penalty0 (1):\penalty0 85--94, 1997.

\bibitem[Caro et~al.(1996)Caro, Seb?, and Tarsi]{caro96}
Y.~Caro, A.~Seb\H o, and M.~Tarsi.
\newblock Recognizing greedy structures.
\newblock \emph{Journal of Algorithms}, 20\penalty0 (1):\penalty0 137 -- 156,
  1996.

\bibitem[Chvátal and Slater(1993)]{chvatal93}
V.~Chvátal and P.~J. Slater.
\newblock A note on well-covered graphs.
\newblock volume~55 of \emph{Annals of Discrete Mathematics}, pages 179 -- 181.
  Elsevier, 1993.

\bibitem[Corneil et~al.(1981)Corneil, Lerchs, and Stewart-Burlingham]{corneil81}
D.~Corneil, H.~Lerchs, and L.~Stewart-Burlingham.
\newblock Complement reducible graphs.
\newblock \emph{Discrete Applied Mathematics}, 3\penalty0 (3):\penalty0 163 --
  174, 1981.

\bibitem[Courcelle and Olariu(2000)]{courcelle2}
B.~Courcelle and S.~Olariu.
\newblock Upper bounds to the clique width of graphs.
\newblock \emph{Discrete Applied Mathematics}, 101\penalty0 (1):\penalty0 77 --
  114, 2000.

\bibitem[Courcelle et~al.(2000)Courcelle, Makowsky, and Rotics]{courcelle1}
B.~Courcelle, J.~Makowsky, and U.~Rotics.
\newblock Linear time solvable optimization problems on graphs of bounded
  clique width.
\newblock \emph{Theory of Computing Systems}, 33:\penalty0 125--150, 2000.

\bibitem[Cygan et~al.(2015)Cygan, Fomin, Kowalik, Lokshtanov, Marx, Pilipczuk,
  Pilipczuk, and Saurabh]{cygan2015parameterized}
M.~Cygan, F.~V. Fomin, {\L}.~Kowalik, D.~Lokshtanov, D.~Marx, M.~Pilipczuk,
  M.~Pilipczuk, and S.~Saurabh.
\newblock \emph{Parameterized algorithms} 
\newblock Springer International, 2015.

\bibitem[Dean and Zito(1994)]{dean94}
N.~Dean and J.~Zito.
\newblock Well-covered graphs and extendability.
\newblock \emph{Discrete Mathematics}, 126\penalty0 (1):\penalty0 67 -- 80,
  1994.

\bibitem[Downey and Fellows(2013)]{downey-fellows13}
R.~Downey and M.~Fellows.
\newblock \emph{Fundamentals of Parameterized Complexity (Texts in Computer
  Science)}.
\newblock Springer-Verlag London, 2013.
\newblock ISSN 1864-0941.

\bibitem[Eppstein(2000)]{eppstein00}
D.~Eppstein.
\newblock {Diameter and treewidth in minor-closed graph families}.
\newblock \emph{Algorithmica}, 27:\penalty0 275--291, 2000.

\bibitem[Finbow et~al.(1993)Finbow, Hartnell, and Nowakowski]{finbow09}
A.~Finbow, B.~Hartnell, and R.~Nowakowski.
\newblock A characterization of well covered graphs of girth 5 or greater.
\newblock \emph{Journal of Combinatorial Theory, Series B}, 57\penalty0
  (1):\penalty0 44 -- 68, 1993.

\bibitem[Fradkin(2009)]{fradkin09}
A.~O. Fradkin.
\newblock On the well-coveredness of cartesian products of graphs.
\newblock \emph{Discrete Mathematics}, 309\penalty0 (1):\penalty0 238 -- 246,
  2009.

\bibitem[Giakoumakis(1996)]{giak96}
V.~Giakoumakis.
\newblock {$P_4$}-laden graphs: A new class of brittle graphs.
\newblock \emph{Information Processing Letters}, 60\penalty0 (1):\penalty0 29
  -- 36, 1996.

\bibitem[Golumbic and Rotics(2000)]{golumbic00}
M.~C. Golumbic and U.~Rotics.
\newblock On the clique-width of some perfect graph classes.
\newblock \emph{International Journal of Foundations of Computer Science},
  11\penalty0 (03):\penalty0 423--443, 2000.

\bibitem[Jamison and Olariu(1992)]{olariu92}
B.~Jamison and S.~Olariu.
\newblock A tree representation for {$P_4$}-sparse graphs.
\newblock \emph{Discrete Applied Mathematics}, 35\penalty0 (2):\penalty0 115 --
  129, 1992.

\bibitem[Jamison and Olariu(1995)]{jamison95}
B.~Jamison and S.~Olariu.
\newblock P-components and the homogeneous decomposition of graphs.
\newblock \emph{SIAM Journal on Discrete Mathematics}, 8\penalty0 (3):\penalty0
  448--463, 1995.

\bibitem[Klein et~al.(2013)Klein, de~Mello, and Morgana]{klein13}
S.~Klein, C.~P. de~Mello, and A.~Morgana.
\newblock Recognizing well covered graphs of families with special
  {$P_4$}-components.
\newblock \emph{Graphs and Combinatorics}, 29:\penalty0 553--567, 2013.

\bibitem[Plummer(1970)]{plummer70}
M.~D. Plummer.
\newblock Some covering concepts in graphs.
\newblock \emph{Journal of Combinatorial Theory}, 8\penalty0 (1):\penalty0 91
  -- 98, 1970.

\bibitem[Plummer(1993)]{plummer93}
M.~D. Plummer.
\newblock Well-covered graphs: a survey.
\newblock \emph{Quaestiones Mathematicae}, 16\penalty0 (3):\penalty0 253--287,
  1993.

\bibitem[Prisner et~al.(1996)Prisner, Topp, and Vestergaard]{prisner96}
E.~Prisner, J.~Topp, and P.~D. Vestergaard.
\newblock Well covered simplicial, chordal, and circular arc graphs.
\newblock \emph{J. Graph Theory}, 21\penalty0 (2):\penalty0 113--119, 1996.

\bibitem[Randerath and Vestergaard(2006)]{randerath06}
B.~Randerath and P.~D. Vestergaard.
\newblock Well-covered graphs and factors.
\newblock \emph{Discrete Applied Mathematics}, 154\penalty0 (9):\penalty0 1416
  -- 1428, 2006.

\bibitem[Sankaranarayana and Stewart(1992)]{stewart92}
R.~S. Sankaranarayana and L.~K. Stewart.
\newblock Complexity results for well-covered graphs.
\newblock \emph{Networks}, 22:\penalty0 247 -- 262, 1992.

\bibitem[Tankus and Tarsi(1997)]{tankus97}
D.~Tankus and M.~Tarsi.
\newblock The structure of well-covered graphs and the complexity of their
  recognition problems.
\newblock \emph{Journal of Combinatorial Theory, Series B}, 69\penalty0
  (2):\penalty0 230 -- 233, 1997.

\end{thebibliography}

\end{document}